
\documentclass[journal,onecolumn,12pt]{IEEEtran}

 \usepackage{graphicx}
\usepackage{mathrsfs}
\usepackage{amsfonts}
\usepackage{graphicx}
\usepackage{amssymb}
\usepackage{amsmath}
\usepackage{cases}
\usepackage{caption}
\usepackage{hyperref}

\newtheorem{Thm}{Theorem}

\newtheorem{Lem}[Thm]{Lemma}

\newtheorem{Exp}{Example}
\newcommand{\wt}{{\mathrm{wt}}}
\newcommand{\tr}{{\mathrm{Tr}}}
\newcommand{\C}{{\mathcal{C}}}
\makeatletter
 \@addtoreset{equation}{section}
 
\makeatother

\setcounter{page}{1}

\begin{document}

\title{A Class of  Linear Codes With Three Weights\thanks{The  authors are supported by by a National Key Basic Research Project of China (2011CB302400), National Natural Science Foundation of China (61379139),  the ``Strategic Priority Research Program" of the Chinese Academy of Sciences, Grant No. XDA06010701 and Foundation of NSSFC(No.13CTJ006).}}

\author{Qiuyan Wang, \thanks{Q. Wang is with the State Key Laboratory of Information Security, the Institute of Information Engineering, The Chinese Academy of Sciences, Beijing, China. Email: wangqiuyan@iie.ac.cn}
 Fei Li$^{*}$\thanks{F. Li is the corresponding author and  with School of Statistics and Applied Mathematics,
   Anhui University of Finance and Economics,
 Bengbu City, Anhui Province, China. Email: cczxlf@163.com} and
 Dongdai Lin \thanks{ D. Lin is with the State Key Laboratory of Information Security, the Institute of Information Engineering, The Chinese Academy of Sciences, Beijing, China. Email: ddlin@iie.ac.cn}
}

\date{\today}
\maketitle

\begin{abstract}
Linear codes have been an interesting subject of study for many years. Recently, linear codes with few weights have been constructed and extensively studied.  In this paper, for an odd prime $p$, a class of three-weight linear codes over $\mathbb{F}_{p}$ are  constructed. The weight distributions of the linear codes are settled.  These codes have applications in  authentication codes, association schemes and data storage systems.
\end{abstract}

\begin{keywords}
Association schemes, authentication codes, linear codes, secret sharing schemes.
\end{keywords}

\section{Introduction}\label{sec-intro}

Throughout this paper, let $p$ be an odd prime, and let $q=p^{m}$ for a positive integer $m>2$. Let $\mathbb{F}_{q}$ and $\mathbb{F}_{p}$ denote the finite field with $q$ elements and $p$ elements, respectively.

An $(n,M)$ code $\mathcal{C}$ over $\mathbb{F}_{p}$ is a subset of $\mathbb{F}_{p}^{n}$ of size $M$. The vectors in $\mathcal{C}$ are called codewords of $\mathcal{C}$. The third important parameter of a code $\mathcal{C}$, besides the length $n$ and size $M$, is the minimum Hamming distance between codewords. The Hamming distance between two codewords $\mathbf{x},\mathbf{y}\in \mathcal{C}$ is defined to be the number of places where $\mathbf{x}$ and $\mathbf{y}$ differ. If the code $\mathcal{C}$ is a $k$-dimensional subspace of
 $\mathbb{F}_{p}^{n}$ with minimum (Hamming) distance $d$, it will be called an $[n,k,d]$ code over $\mathbb{F}_p$.

 Let $A_{i}$ be the number of codewords of weight $i$ in $\mathcal{C}$ of length $n$. The weight enumerator of $\mathcal{C}$ is defined by
$$
1+A_1x+A_2x^{2}+\cdots+A_{n}x^{n}.
$$
For $0\leq i\leq n$, the list $A_{i}$  is called the weight distribution or weight spectrum of $\mathcal{C}$. A code $\mathcal{C}$ is said to be a $t$-weight code if the number of nonzero $A_i$ with $1\leq i\leq n$ is equal to $t$. A great deal of research is devoted to the computation of the weight distribution of specific codes \cite{BM72,BM73,BM10,M66,M04,S95,SB12,V12}, since it does give important information of both practical and theoretical significance.

Let $D=\{d_1,d_2,\ldots, d_n\} \subseteq \mathbb{F}_{q}$. Let $\tr$ denote the trace function from $\mathbb{F}_q$ onto $\mathbb{F}_p$. A linear code of length $n$ over $\mathbb{F}_{p}$ is defined by
$$
\mathcal{C}_{D}=\{(\tr(xd_1), \tr(xd_2),\ldots, \tr(xd_{n})):x\in \mathbb{F}_{q}\},
$$
and $D$ is called the defining set of this code $\mathcal{C}_{D}$. This construction is proposed by Ding et al \cite{DLLZ} and  is generic in the sense that many known linear codes could be produced by selecting the defining set. If the defining set $D$ is well chosen, some optimal linear codes with few weights can be obtained \cite{DH,DYin,LY,WDX,ZD13,ZD14,ZLFH}. For more details, the readers are referred to \cite{D15,DD,DY,LYL1,LYL2}.

In this correspondence, for $a\in \mathbb{F}_{p}^{*}$, we set
$$
D_{a}=\{x\in \mathbb{F}_q^{*}: \tr(x)=a\}=\{d_1,d_2,\ldots, d_{n_a}\},
$$
and
\begin{equation}\label{defcode}
\mathcal{C}_{D_a}=\{\mathbf{c}_{x}=(\tr(xd_1^2), \tr(xd_2^2), \ldots, \tr(xd_{n_a}^2)): d_{i}\in D_a, 1\leq i\leq n_a, x\in \mathbb{F}_{q}\}.
\end{equation}

The objective of this paper is to determine the weight distribution of the proposed linear codes. Results show that they are three-weight linear codes. The  linear codes in this paper may yield associate schemes with framework introduced in \cite{CG} and can be employed to construct secret sharing schemes \cite{YD06}.

\section{Preliminaries}

In this section, we present some basic notations and results of group characters and exponential sums.

We start with the trace function. For $\alpha\in \mathbb{F}_{p^m}$, the \textit{absolute trace} $\tr(\alpha)$ of $\alpha$ is defined by \cite{LiNi}
$$
\tr(\alpha)=\alpha+\alpha^p+\alpha^{p^2}+\cdots+\alpha^{p^{m-1}}.
$$
By definition, $\tr(\alpha)$ is always an element of $\mathbb{F}_{p}$.

An additive character $\chi$ of $\mathbb{F}_{q}$ is a homomorphism from $\mathbb{F}_{q}$ into the multiplicative group $U$ of complex numbers of absolute value $1$, that is, a mapping from $\mathbb{F}_q$ into $U$ with $\chi(x+y)=\chi(x)\chi(y)$ for all $x,y \in \mathbb{F}_{q}$ \cite{LiNi}. For any $b\in \mathbb{F}_{q}$, the function
$$
\chi_{b}(x)=\zeta_{p}^{\tr(bx)} \textrm{for\ all }\ x\in \mathbb{F}_{q},
$$
defines an additive character of $\mathbb{F}_{q}$,
where $\zeta_{p}=e^{\frac{2\pi\sqrt{-1}}{p}}$. For $b=0$, the character $\chi_0(x)=1$ for all $x\in \mathbb{F}_{q}$ and is called the \textit{trivial} character of $\mathbb{F}_{q}$. All other additive character of $\mathbb{F}_{q}$ are called \textit{nontrivial}.    For $b=1$, the character $\chi_{1}$ will be called the \textit{canonical additive character} of $\mathbb{F}_{q}$. All additive characters of $\mathbb{F}_{q}$ can be expressed in terms of $\chi_{1}$: $\chi_{b}(x)=\chi_{1}(bx)$ for all $x\in \mathbb{F}_{q}$ \cite{LiNi}.

The orthogonal property of additive characters of $\mathbb{F}_{q}$ which can be found in Theorem $5.4$ in \cite{LiNi} is given by
\begin{equation}\label{eq-orthogonality}
\sum_{x\in \mathbb{F}_{q}}\chi(x)=\left\{\begin{array}{ll}
                                           q, & \textrm{if\ } \chi \textrm{\ is\ trivial},  \\
                                           0, & \textrm{if\ } \chi \textrm{\ is\ nontrivial}.
                                         \end{array}
                                         \right.
\end{equation}

Characters of the \textit{multiplicative group} $\mathbb{F}_{q}^{*}$ of $\mathbb{F}_{q}$ are called \textit{multiplicative characters} of $\mathbb{F}_{q}$. It is known that all characters of $\mathbb{F}_{q}^{*}$ are given by
$$
\psi_{j}(g^{k})=\zeta_{p}^{2\pi\sqrt{-1}jk/(q-1)} \textrm{for }\ k=0,1,\ldots, q-2,
$$
where $0\leq j\leq q-2$ and $g$ is a generator of $\mathbb{F}_{q}^{*}$ \cite{LiNi}. For $j=(q-1)/2$, the multiplicative character $\psi_{(q-1)/2}$ is called the \textit{quadratic character} of $\mathbb{F}_{q}$, and is denoted by $\eta$ in this paper. It is convenient to extend the definition of $\eta$ by setting $\eta(0)=0$. With this definition, we have then
\begin{equation}\label{eq-2.2}
\sum_{x\in \mathbb{F}_{q}}\eta(x)=\sum_{x\in \mathbb{F}_{q}^{*}}\eta(x)=0.
\end{equation}

The quadratic Gauss sum $G(\eta, \chi_1)$ over $\mathbb{F}_{q}$ is defined by
$$G(\eta, \chi_1)=\sum_{x\in \mathbb{F}_{q}}\eta(x)\chi_1(x),$$
and the quadratic Gauss sum $G(\overline{\eta}, \overline{\chi}_1)$ over $\mathbb{F}_{p}$ is defined by
$$
G(\overline{\eta},\overline{\chi}_{1})=\sum_{x\in \mathbb{F}_{p}}\overline{\eta}(x)\overline{\chi}_1(x),
$$
where $\overline{\eta}$ and $\overline{\chi}_1$ denote the quadratic and canonical character of $\mathbb{F}_{p}$, respectively.

The explicit values of quadratic Gauss sums over a finite field are determined and  given in the following lemma.
\begin{Lem}[Theorem $5.15$, \cite{LiNi}]\label{lem1}
Let the symbols be the same as before. Then
$$
G(\eta, \chi_1)=(-1)^{(m-1)}\sqrt{-1}^{\frac{(p-1)^2m}{4}}\sqrt{q},
$$
and
$$
G(\overline{\eta}, \overline{\chi}_1)=\sqrt{-1}^{\frac{(p-1)^{2}}{4}}\sqrt{p}.
$$

\end{Lem}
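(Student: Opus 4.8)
The plan is to reduce the general identity over $\mathbb{F}_q$ to the one over the prime field $\mathbb{F}_p$, and then to carry out the classical two-step evaluation of the quadratic Gauss sum over $\mathbb{F}_p$: first its modulus, then its sign.

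For the reduction I would invoke the Davenport--Hasse lifting relation, a standard tool: if a multiplicative character and an additive character of $\mathbb{F}_p$ are lifted to $\mathbb{F}_{p^m}$ through the norm and the trace respectively, then the Gauss sum of the lifted pair equals $(-1)^{m-1}$ times the $m$-th power of the original Gauss sum. First I would verify that the lift of the quadratic character $\overline{\eta}$ of $\mathbb{F}_p$ is the quadratic character $\eta$ of $\mathbb{F}_{p^m}$ --- the composite of $\overline{\eta}$ with the norm is a character of exact order $2$, and since $\mathbb{F}_{p^m}^{*}$ is cyclic of even order there is only one such character --- and that the lift of the canonical additive character of $\mathbb{F}_p$ is $\chi_1$, by transitivity of the trace. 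This yields
\[
G(\eta,\chi_1)=(-1)^{m-1}\,G(\overline{\eta},\overline{\chi}_1)^{m},
\]
so once $G(\overline{\eta},\overline{\chi}_1)=\sqrt{-1}^{(p-1)^2/4}\sqrt{p}$ is established, the first identity follows because $\big(\sqrt{-1}^{(p-1)^2/4}\big)^{m}=\sqrt{-1}^{(p-1)^2 m/4}$ and $(\sqrt{p})^{m}=\sqrt{q}$.

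It then remains to evaluate $g:=G(\overline{\eta},\overline{\chi}_1)=\sum_{t\in\mathbb{F}_p}\overline{\eta}(t)\zeta_p^{t}$. Step one (the modulus): since $\overline{\eta}$ is real-valued, replacing $t$ by $-t$ gives $\overline{g}=\overline{\eta}(-1)\,g$, while the routine orthogonality computation gives $g\overline{g}=p$; combining these, $g^{2}=\overline{\eta}(-1)\,p=(-1)^{(p-1)/2}p$. Hence $g=\pm\sqrt{p}$ when $p\equiv 1\pmod 4$ and $g=\pm\sqrt{-1}\sqrt{p}$ when $p\equiv 3\pmod 4$, and a one-line check on $(p-1)^{2}/4$ modulo $4$ shows this already matches $\sqrt{-1}^{(p-1)^{2}/4}\sqrt{p}$ up to an undetermined sign. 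Step two (the sign) is the real obstacle: showing the sign is $+$ is Gauss's sign theorem. I would prove it by Schur's argument --- write $g=\sum_{t}\zeta_p^{t^{2}}=\mathrm{tr}(T)$ where $T=(\zeta_p^{jk})_{0\le j,k\le p-1}$ is the finite Fourier matrix; observe $T^{2}=p\,\Pi$, with $\Pi$ the permutation matrix of $j\mapsto -j$; since $\Pi$ is an involution with a single fixed point, its $(+1)$- and $(-1)$-eigenspaces have dimensions $(p+1)/2$ and $(p-1)/2$ respectively, so the eigenvalues of $T$ lie in $\{\pm\sqrt{p},\pm\sqrt{-1}\sqrt{p}\}$ with $\pm\sqrt{p}$ contributing total multiplicity $(p+1)/2$ and $\pm\sqrt{-1}\sqrt{p}$ total multiplicity $(p-1)/2$. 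The individual multiplicities, hence $\mathrm{tr}(T)$, are then pinned down by one further invariant of $T$, cleanest being $\det T$ evaluated as the Vandermonde product $\prod_{0\le j<k\le p-1}(\zeta_p^{k}-\zeta_p^{j})$. The hard part is precisely this sign determination; everything else is routine bookkeeping. (Alternatively, step two can be carried out analytically in the style of Dirichlet, via Poisson summation applied to a Gaussian, giving the Landsberg--Schaar reciprocity and the sign by comparison with $\int_{\mathbb{R}}e^{-\pi x^{2}}\,dx=1$; or, since the whole lemma is textbook material, one may simply cite \cite{LiNi}.)
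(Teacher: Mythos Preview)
Your outline is correct and is essentially the standard textbook derivation (Davenport--Hasse to reduce to $\mathbb{F}_p$, then modulus via $g\overline{g}=p$ and $\overline{g}=\overline{\eta}(-1)g$, then Gauss's sign theorem, e.g.\ via Schur's Fourier-matrix argument). Note, however, that the paper does not prove this lemma at all: it is stated as a direct quotation of Theorem~5.15 of \cite{LiNi}, with no accompanying argument. So there is nothing to compare against; your proposal supplies a proof where the paper simply cites one. Your parenthetical remark that one may just invoke \cite{LiNi} is in fact exactly what the paper does.
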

\begin{Lem} [Theorem $5.33$, \cite{LiNi}]\label{lem2}
Let $\chi$ be a nontrivial additive character of $\mathbb{F}_{q}$, and let $f(x)=a_2x^{2}+a_1x+a_0\in \mathbb{F}_{q}[x]$ with $a_2\neq0$. Then
$$
\sum_{x\in \mathbb{F}_{q}}\chi(f(x))=\chi(a_0-a_1^{2}(4a_2)^{-1})\eta(a_2)G(\eta,\chi).
$$
\end{Lem}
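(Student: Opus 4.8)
The plan is the classical \emph{completing the square} reduction: first collapse the quadratic to a pure square, then reduce the resulting one-variable sum $\sum_{y}\chi(a_2y^2)$ to the Gauss sum $G(\eta,\chi):=\sum_{x\in\mathbb{F}_q}\eta(x)\chi(x)$ (the obvious analogue, for a general nontrivial $\chi$, of the quantity $G(\eta,\chi_1)$ defined above).

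Since $p$ is odd, $2$ and hence $4=2^2$ are invertible in $\mathbb{F}_q$, and expanding the right-hand side verifies the polynomial identity
$$a_2x^2+a_1x+a_0=a_2\left(x+a_1(2a_2)^{-1}\right)^2+\left(a_0-a_1^2(4a_2)^{-1}\right).$$
Applying $\chi$, using $\chi(u+v)=\chi(u)\chi(v)$, pulling the constant term out of the sum, and then performing the substitution $y=x+a_1(2a_2)^{-1}$ (a bijection of $\mathbb{F}_q$ onto itself) yields
$$\sum_{x\in\mathbb{F}_q}\chi(f(x))=\chi\left(a_0-a_1^2(4a_2)^{-1}\right)\sum_{y\in\mathbb{F}_q}\chi(a_2y^2).$$
It therefore suffices to prove $\sum_{y\in\mathbb{F}_q}\chi(a_2y^2)=\eta(a_2)G(\eta,\chi)$.

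For the latter, the key elementary fact is that for every $c\in\mathbb{F}_q$ the equation $y^2=c$ has exactly $1+\eta(c)$ solutions in $\mathbb{F}_q$: the squaring map is two-to-one from $\mathbb{F}_q^{*}$ onto the index-$2$ subgroup of nonzero squares, and the convention $\eta(0)=0$ makes the count correct also at $c=0$. Hence
$$\sum_{y\in\mathbb{F}_q}\chi(a_2y^2)=\sum_{c\in\mathbb{F}_q}(1+\eta(c))\chi(a_2c)=\sum_{c\in\mathbb{F}_q}\chi(a_2c)+\sum_{c\in\mathbb{F}_q}\eta(c)\chi(a_2c).$$
Because $a_2\neq0$ and $\chi$ is nontrivial, $c\mapsto\chi(a_2c)$ is again a nontrivial additive character, so the first sum is $0$ by \eqref{eq-orthogonality}. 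In the second sum, substituting $c=a_2^{-1}u$ and using $\eta(a_2^{-1}u)=\eta(a_2^{-1})\eta(u)=\eta(a_2)\eta(u)$ (valid since $\eta(a_2)=\pm1$) gives $\eta(a_2)\sum_{u\in\mathbb{F}_q}\eta(u)\chi(u)=\eta(a_2)G(\eta,\chi)$. Combining the last two displays proves the claim, and hence the lemma.

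There is no genuine obstacle here; the only points deserving a line of justification are the square-count identity $\#\{y\in\mathbb{F}_q:y^2=c\}=1+\eta(c)$ (in particular its validity at $c=0$ under the convention $\eta(0)=0$) and the remark that $c\mapsto\chi(a_2c)$ stays nontrivial, so that the orthogonality relation \eqref{eq-orthogonality} can be invoked.
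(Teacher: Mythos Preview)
Your argument is correct and is exactly the classical proof (completing the square, then the square-count identity to reduce $\sum_y\chi(a_2y^2)$ to $\eta(a_2)G(\eta,\chi)$). The paper itself does not prove this lemma at all: it is quoted verbatim as Theorem~5.33 of \cite{LiNi}, so there is no in-paper proof to compare against, and your write-up coincides with the standard textbook derivation.
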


\begin{Lem}[Lemma 7, \cite{DD15}]\label{lem3}
Let the symbols be the same as before. Then

\begin{enumerate}
\item if $m\geq2$ is even, then $\eta(y)=1$ for each $y\in \mathbb{F}_{p}^{*}$; \\
\item if $m$ is odd, then $\eta(y)=\overline{\eta}(y)$ for each $y\in \mathbb{F}_{p}^{*}$.
\end{enumerate}
\end{Lem}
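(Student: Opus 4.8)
\textit{Proof plan.} This is a standard description of how the quadratic character of $\mathbb{F}_q$ restricts to the prime field, and the plan is to derive it from Euler's criterion together with the factorization of $q-1$. Recall that for $\alpha\in\mathbb{F}_q^{*}$ one has $\eta(\alpha)=1$ precisely when $\alpha^{(q-1)/2}=1$ in $\mathbb{F}_q$ (and $\eta(\alpha)=-1$ otherwise), and likewise $\overline{\eta}(y)=1$ precisely when $y^{(p-1)/2}=1$ in $\mathbb{F}_p$ for $y\in\mathbb{F}_p^{*}$; in both cases the indicated power is an element of $\{1,-1\}$ because the base is nonzero. So it is enough to compare $y^{(q-1)/2}$ with $y^{(p-1)/2}$ for $y\in\mathbb{F}_p^{*}\subseteq\mathbb{F}_q^{*}$.

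First I would write $q-1=p^{m}-1=(p-1)s$ with $s=1+p+p^{2}+\cdots+p^{m-1}$, so that $(q-1)/2=\frac{p-1}{2}\,s$, and therefore
\[
y^{(q-1)/2}=\bigl(y^{(p-1)/2}\bigr)^{s}\in\mathbb{F}_p\qquad\text{for all }y\in\mathbb{F}_p^{*}.
\]
Since $y^{(p-1)/2}\in\{1,-1\}$, the right-hand side depends only on $s\bmod 2$; and because $p$ is odd, each power $p^{i}$ is odd, so $s$ is a sum of $m$ odd integers and $s\equiv m\pmod{2}$. Hence $y^{(q-1)/2}=\bigl(y^{(p-1)/2}\bigr)^{m}$ in $\mathbb{F}_q$.

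If $m$ is even, this last expression equals $1$ for every $y\in\mathbb{F}_p^{*}$, and Euler's criterion over $\mathbb{F}_q$ then gives $\eta(y)=1$, which is part~(1). If $m$ is odd, it equals $y^{(p-1)/2}$, so $y^{(q-1)/2}=1$ in $\mathbb{F}_q$ exactly when $y^{(p-1)/2}=1$ in $\mathbb{F}_p$; reading both sides through Euler's criterion, and using that $\eta(y),\overline{\eta}(y)\in\{1,-1\}$, yields $\eta(y)=\overline{\eta}(y)$, which is part~(2). There is essentially no obstacle here: the only point to watch is that $y\neq 0$, so that $y^{(p-1)/2}$ is genuinely $\pm 1$ and Euler's criterion applies on both fields; everything else is the exponent identity $(q-1)/2=\frac{p-1}{2}s$ together with the parity observation $s\equiv m\pmod{2}$. (Alternatively one can argue group-theoretically: $\mathbb{F}_p^{*}$ is generated by $g^{s}$ for a generator $g$ of $\mathbb{F}_q^{*}$, and $g^{s}$ is a square in $\mathbb{F}_q$ iff $s$---equivalently $m$---is even, which forces $\eta|_{\mathbb{F}_p^{*}}$ to be either trivial or the unique order-two character $\overline{\eta}$ of $\mathbb{F}_p^{*}$.)
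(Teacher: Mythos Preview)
Your argument is correct. Note, however, that the paper does not supply its own proof of this lemma: it is quoted verbatim as Lemma~7 of \cite{DD15} and used as a black box. Your approach via Euler's criterion together with the exponent factorization $(q-1)/2=\tfrac{p-1}{2}\,s$, $s=1+p+\cdots+p^{m-1}\equiv m\pmod 2$, is the standard elementary proof and fills in precisely what the paper omits; the alternative group-theoretic remark (that a generator of $\mathbb{F}_p^{*}$ is $g^{s}$ for a generator $g$ of $\mathbb{F}_q^{*}$, hence a square in $\mathbb{F}_q$ iff $m$ is even) is an equally valid route to the same conclusion.
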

\begin{Lem}\label{lem4}
Let the symbols be the same as before. Then
$$
\sum_{y\in \mathbb{F}_{p}^{*}}\sum_{x\in \mathbb{F}_{q}}\chi_1(byx^{2})=\left\{\begin{array}{ll}
                                                                          0, & \textrm{if\ } m\equiv1\pmod{2}, \\
                                                                          (p-1)\eta(b)G(\eta, \chi_1), & \textrm{if\ } m\equiv0\pmod{2}.
                                                                        \end{array}
                                                                        \right.
$$
\end{Lem}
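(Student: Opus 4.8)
The plan is to evaluate the double sum by swapping the order of summation and applying the earlier lemmas. First I would write
\[
\sum_{y\in \mathbb{F}_{p}^{*}}\sum_{x\in \mathbb{F}_{q}}\chi_1(byx^{2})
=\sum_{y\in \mathbb{F}_{p}^{*}}\sum_{x\in \mathbb{F}_{q}}\chi_1\bigl((by)x^{2}\bigr),
\]
and apply Lemma \ref{lem2} to the inner sum with the quadratic polynomial $f(x)=(by)x^{2}$, i.e.\ $a_2=by$, $a_1=a_0=0$. Since $b\in\mathbb{F}_p^*$ and $y\in\mathbb{F}_p^*$, we have $by\neq 0$, so Lemma \ref{lem2} gives
\[
\sum_{x\in \mathbb{F}_{q}}\chi_1\bigl((by)x^{2}\bigr)=\eta(by)\,G(\eta,\chi_1).
\]
Substituting back, the double sum becomes $G(\eta,\chi_1)\sum_{y\in\mathbb{F}_p^*}\eta(by)$, and using multiplicativity $\eta(by)=\eta(b)\eta(y)$ this is $\eta(b)\,G(\eta,\chi_1)\sum_{y\in\mathbb{F}_p^*}\eta(y)$.

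The remaining task is to evaluate $\sum_{y\in\mathbb{F}_p^*}\eta(y)$, where $\eta$ is the quadratic character of $\mathbb{F}_q$ restricted to $\mathbb{F}_p^*$. Here I would invoke Lemma \ref{lem3}: if $m$ is even, then $\eta(y)=1$ for every $y\in\mathbb{F}_p^*$, so $\sum_{y\in\mathbb{F}_p^*}\eta(y)=p-1$; if $m$ is odd, then $\eta(y)=\overline{\eta}(y)$ on $\mathbb{F}_p^*$, so $\sum_{y\in\mathbb{F}_p^*}\eta(y)=\sum_{y\in\mathbb{F}_p^*}\overline{\eta}(y)=0$ by \eqref{eq-2.2} applied over $\mathbb{F}_p$. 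Combining the two cases yields exactly the claimed formula: the sum is $0$ when $m$ is odd and $(p-1)\eta(b)G(\eta,\chi_1)$ when $m$ is even.

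I expect no serious obstacle here; the only point requiring a little care is making sure Lemma \ref{lem2} applies, which it does precisely because $by\neq 0$ for $b,y\in\mathbb{F}_p^*$, and keeping track of which quadratic character ($\eta$ over $\mathbb{F}_q$ versus $\overline{\eta}$ over $\mathbb{F}_p$) is in play when restricting to the subfield — this is what Lemma \ref{lem3} is for. One could alternatively note that in the even-$m$ case $\eta(b)=1$ as well, so the answer simplifies to $(p-1)G(\eta,\chi_1)$, but I would leave the statement in the $\eta(b)$-form as given.
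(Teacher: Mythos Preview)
Your proof is correct and follows essentially the same route as the paper's: apply Lemma~\ref{lem2} to the inner sum, pull out $\eta(b)\,G(\eta,\chi_1)$, and use Lemma~\ref{lem3} to evaluate $\sum_{y\in\mathbb{F}_p^*}\eta(y)$. One small slip: $b$ lies in $\mathbb{F}_q^*$, not $\mathbb{F}_p^*$ --- this does not affect the main argument since $by\neq 0$ still holds, but it does invalidate your closing parenthetical that $\eta(b)=1$ for even $m$, as Lemma~\ref{lem3} only covers elements of $\mathbb{F}_p^*$.
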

\begin{proof}
It follows from Lemma \ref{lem2} that
$$
\sum_{y\in \mathbb{F}_{p}^{*}}\sum_{x\in \mathbb{F}_{q}}\chi_1(byx^{2})=\eta(b)G(\eta,\chi_1)\sum_{y\in \mathbb{F}_{p}^{*}}\eta(y).
$$
Using Lemma \ref{lem3}, we get
$$
\sum_{y\in \mathbb{F}_{p}^{*}}\eta(y)=\left\{\begin{array}{ll}
                                               0, & \textrm{if\ } m\equiv1\pmod{2}, \\
                                               p-1,  & \textrm{if\ } m\equiv0\pmod{2}.
                                             \end{array}
                                             \right.
$$
Together with Lemma \ref{lem1}, we get this lemma.
\end{proof}
\section{The linear codes with three weights }
In this section, we will present a class of linear codes with three weights over $\mathbb{F}_{p}$. The weight distributions of the class linear codes are also settled.

For $a\in \mathbb{F}_{p}$,  set
 $$
 D_a=\{x\in \mathbb{F}_{q}^{*}: \tr(x)=a\}.
 $$
 If $a=0$, the weight distribution of the code $\C_{D_0}$ of \eqref{defcode} has been determined in \cite{YY15}. Hence, we only consider the case $a\in \mathbb{F}_{p}^{*}$ in this paper.

It is well known that  \cite{LiNi}
  \begin{equation}\label{eq-na}
  N_a=|\{x\in \mathbb{F}_{q}:\tr(x)=a, a\in \mathbb{F}_{p}\}|=p^{m-1}.
  \end{equation}

    Then  the length $n_a$ of the code  $\mathcal{C}_{D_a}$ $(a\in \mathbb{F}_{p}^{*})$ of \eqref{defcode} satisfies
 $$
 n_a= N_a=p^{m-1}.
 $$

 Define
 $$n_{(b,a)}=|\{x\in \mathbb{F}_{q}: \tr(x)=a \textrm{\ and\ } \tr(bx^2)=0\}|.$$
 For any $b\in \mathbb{F}_{q}^{*}$, the Hamming weight $\wt(\mathbf{c}_b)$ of the  codeword
 $
 \mathbf{c}_b
 $
 of the code $\mathcal{C}_{D_a}$ is given by
\begin{equation} \label{eq-wt}
\wt(\mathbf{c}_b)=N_a-n_{(b,a)}=p^{m-1}-n_{(b,a)}.
\end{equation}

 By   the orthogonal property of additive characters, for $b\in \mathbb{F}_{q}^{*}$ we have
\begin{align}\label{eq-weight}
  n_{(b,a)} &= p^{-2}\sum_{x\in \mathbb{F}_{q}}\left(\sum_{y \in F_{p}}\zeta_{p}^{y\tr(bx^{2})}\right)
\left(\sum_{z \in F_{p}}\zeta_{p}^{z(\tr(x)-a)}\right)  \nonumber \\
 &=p^{-2}\sum_{x\in \mathbb{F}_{q}}\left(1+\sum_{y \in F_{p}^{\ast}}\zeta_{p}^{y\tr(bx^{2})}\right)
\left(1+\sum_{z \in F_{p}^{\ast}}\zeta_{p}^{z(\tr(x)-a)}\right)\nonumber \\
&= p^{m-2} + p^{-2}\sum_{y \in F_{p}^{\ast}}\sum_{x\in \mathbb{F}_{q}}\zeta_{p}^{y\tr(bx^{2})}
+ p^{-2}\sum_{z \in F_{p}^{\ast}}\zeta_{p}^{-za}\sum_{x\in \mathbb{F}_{q}}\zeta_{p}^{z\tr(x)} \nonumber \\
&\quad +
p^{-2}\sum_{y \in F_{p}^{\ast}}\sum_{z \in F_{p}^{\ast}}\sum_{x\in \mathbb{F}_{q}}\zeta_{p}^{y\tr(bx^{2})+z\tr(x)-za} \nonumber\\
 &= p^{m-2} + p^{-2}\sum_{y \in F_{p}^{\ast}}\sum_{x\in \mathbb{F}_{q}}\zeta_{p}^{y\tr(bx^{2})} +
p^{-2}\sum_{y \in F_{p}^{\ast}}\sum_{z \in F_{p}^{\ast}}\zeta_{p}^{-za}\sum_{x\in \mathbb{F}_{q}}\zeta_{p}^{\tr(byx^{2}+zx)}
 \end{align}
In the sequel,  we will  calculate   $n_{(b,a)}$ for $a\in\mathbb{F}_p^{*}$ .

\begin{Lem}\label{lem5}
Let $b\in \mathbb{F}_{q}^{*}$. Then
\begin{align*}
&\sum_{y\in \mathbb{F}_{p}^{*}}\sum_{z\in \mathbb{F}_{p}^{*}}\sum_{x\in \mathbb{F}_{q}}\chi_1(byx^{2}+zx) & \\ &=\left\{\begin{array}{ll}
                                                                                                                   0,& \textrm{if\ } m\equiv1\pmod{2} \textrm{\ and\ } \tr(b^{-1})=0, \\
                                                                                                                   \eta(b)\eta(-\tr(b^{-1}))(-1)^{\frac{(p-1)(m+1)}{4}}(p-1)p^{\frac{m+1}{2}},&\textrm{if\ } m\equiv1\pmod{2} \textrm{\ and\ } \tr(b^{-1})\neq0,\\
                                                                                                                   \eta(b)(p-1)^{2}G(\eta,\chi_1),& \textrm{if\ } m\equiv0\pmod{2} \textrm{\ and\ } \tr(b^{-1})=0,
                                                                                                                   \\
                                                                                                                  -\eta(b)(p-1)G(\eta,\chi_1), & \textrm{if\ } m\equiv0\pmod{2}\textrm{\ and\ } \tr(b^{-1})\neq0,
                                                                                                                \end{array}
                                                                                                                \right. &
\end{align*}
where $G(\eta,\chi_1)$ is given in Lemma \ref{lem1}.
\end{Lem}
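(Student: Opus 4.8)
The plan is to evaluate the inner sum over $x$ first by completing the square, then handle the sum over $z$, and finally the sum over $y$. Fix $y,z\in\mathbb{F}_p^{*}$ and $b\in\mathbb{F}_q^{*}$. Applying Lemma \ref{lem2} with $f(x)=byx^{2}+zx$ (so $a_2=by\neq0$, $a_1=z$, $a_0=0$) gives
$$
\sum_{x\in\mathbb{F}_q}\chi_1(byx^{2}+zx)=\chi_1\!\left(-\frac{z^{2}}{4by}\right)\eta(by)G(\eta,\chi_1)=\eta(b)\eta(y)G(\eta,\chi_1)\,\zeta_p^{\tr(-z^{2}(4by)^{-1})}.
$$
Since $z\in\mathbb{F}_p^{*}$ we have $\tr(-z^{2}(4by)^{-1})=-z^{2}\tr((4by)^{-1})=-z^{2}(4y)^{-1}\tr(b^{-1})$, because $z^{2}(4y)^{-1}\in\mathbb{F}_p$ pulls out of the trace. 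Write $t=\tr(b^{-1})\in\mathbb{F}_p$. Then the triple sum becomes
$$
\eta(b)G(\eta,\chi_1)\sum_{y\in\mathbb{F}_p^{*}}\eta(y)\sum_{z\in\mathbb{F}_p^{*}}\zeta_p^{-z^{2}(4y)^{-1}t}.
$$

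Next I would evaluate the inner $z$-sum. If $t=0$ it is simply $p-1$, and the remaining $\sum_{y\in\mathbb{F}_p^{*}}\eta(y)$ is $0$ for $m$ odd (giving the first case) and $p-1$ for $m$ even by Lemma \ref{lem3}, yielding $\eta(b)(p-1)^{2}G(\eta,\chi_1)$ (the third case). If $t\neq0$, then as $z$ runs over $\mathbb{F}_p^{*}$ the square $z^{2}$ runs twice over the nonzero squares, so $\sum_{z\in\mathbb{F}_p^{*}}\zeta_p^{-z^{2}c}=\sum_{w\in\mathbb{F}_p^{*}}(1+\overline{\eta}(w))\zeta_p^{-wc}-? $ — more cleanly, $\sum_{z\in\mathbb{F}_p}\zeta_p^{-z^{2}c}=\overline{\eta}(-c)G(\overline{\eta},\overline{\chi}_1)$ by Lemma \ref{lem2} over $\mathbb{F}_p$, hence $\sum_{z\in\mathbb{F}_p^{*}}\zeta_p^{-z^{2}c}=\overline{\eta}(-c)G(\overline{\eta},\overline{\chi}_1)-1$ with $c=(4y)^{-1}t$. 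Substituting back, the "$-1$" term contributes $-\eta(b)G(\eta,\chi_1)\sum_{y\in\mathbb{F}_p^{*}}\eta(y)$, which is $0$ when $m$ is odd and $-\eta(b)(p-1)G(\eta,\chi_1)$ when $m$ is even (this is the fourth case). The Gauss-sum term contributes
$$
\eta(b)G(\eta,\chi_1)\,G(\overline{\eta},\overline{\chi}_1)\sum_{y\in\mathbb{F}_p^{*}}\eta(y)\,\overline{\eta}\!\left(-\tfrac{t}{4y}\right)
=\eta(b)\,\overline{\eta}(-t)\,G(\eta,\chi_1)G(\overline{\eta},\overline{\chi}_1)\sum_{y\in\mathbb{F}_p^{*}}\eta(y)\overline{\eta}(y^{-1}),
$$
using $\overline{\eta}(4)=1$.

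For $m$ even, Lemma \ref{lem3} gives $\eta(y)=1$ on $\mathbb{F}_p^{*}$, so $\sum_{y}\overline{\eta}(y^{-1})=\sum_y\overline{\eta}(y)=0$ by \eqref{eq-2.2}; thus the Gauss-sum term vanishes and only the "$-1$" term survives, confirming the fourth case. For $m$ odd, Lemma \ref{lem3} gives $\eta(y)=\overline{\eta}(y)$ on $\mathbb{F}_p^{*}$, so $\eta(y)\overline{\eta}(y^{-1})=\overline{\eta}(y)\overline{\eta}(y)=1$ and the sum over $y$ equals $p-1$; the "$-1$" term vanishes. Hence for $m$ odd, $t\neq0$ the total is $\eta(b)\overline{\eta}(-t)(p-1)G(\eta,\chi_1)G(\overline{\eta},\overline{\chi}_1)$, and it remains to simplify the product of Gauss sums via Lemma \ref{lem1}: $G(\eta,\chi_1)G(\overline{\eta},\overline{\chi}_1)=(-1)^{m-1}\sqrt{-1}^{\,(p-1)^2 m/4}\sqrt{-1}^{\,(p-1)^2/4}\sqrt{q}\sqrt{p}=(-1)^{m-1}\sqrt{-1}^{\,(p-1)^2(m+1)/4}p^{(m+1)/2}$. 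Since $m$ is odd, $(-1)^{m-1}=1$, and one checks $\sqrt{-1}^{\,(p-1)^2(m+1)/4}=(-1)^{(p-1)(m+1)/4}$ (because $(p-1)^2(m+1)/4$ is even, equal to $2\cdot(p-1)(m+1)/4 \cdot (p-1)/2$; more directly $(p-1)/2$ is an integer so $(p-1)^2/4=((p-1)/2)^2$ has the same parity as $(p-1)/2$, and tracking this gives the stated sign). Writing $\overline{\eta}(-t)=\eta(-\tr(b^{-1}))$ for $m$ odd (Lemma \ref{lem3} again) produces exactly the second case. The main obstacle is the careful bookkeeping of powers of $\sqrt{-1}$ to land on the clean exponent $(p-1)(m+1)/4$; everything else is a routine chain of the quoted lemmas, so I would isolate that sign computation as a short separate verification.
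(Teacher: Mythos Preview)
Your argument is correct and follows essentially the same path as the paper: apply Lemma~\ref{lem2} to the inner $x$-sum, pull out $\eta(b)G(\eta,\chi_1)$, reduce to character sums over $\mathbb{F}_p$ depending on $t=\tr(b^{-1})$, and finish with Lemmas~\ref{lem3} and~\ref{lem1}. The one minor difference is organizational: after the first application of Lemma~\ref{lem2}, the paper substitutes $y\mapsto (4y)^{-1}$ and, for $t\neq0$, absorbs $z^{2}t$ into the $y$-variable so the inner $y$-sum becomes the Gauss sum $\sum_{y\in\mathbb{F}_p^{*}}\zeta_p^{y}\eta(y)$ directly, with the $z$-sum contributing a clean factor $p-1$; you instead keep the original variables and evaluate the $z$-sum as a quadratic Gauss sum over $\mathbb{F}_p$ via a second use of Lemma~\ref{lem2}, which costs you the extra ``$-1$'' correction term that you then dispose of with Lemma~\ref{lem3}. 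Both routes are short and arrive at the same intermediate expression $\eta(b)\overline{\eta}(-t)(p-1)G(\eta,\chi_1)G(\overline{\eta},\overline{\chi}_1)$ in the odd, $t\neq0$ case. Your hesitation about the sign bookkeeping is unwarranted: since $n^{2}\equiv n\pmod 2$ for every integer $n$, one has $\sqrt{-1}^{\,((p-1)/2)^{2}(m+1)}=(-1)^{((p-1)/2)^{2}(m+1)/2}=(-1)^{((p-1)/2)(m+1)/2}=(-1)^{(p-1)(m+1)/4}$, exactly as required.
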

\begin{proof}
It follows from Lemmas \ref{lem2} and \ref{lem3} that
\begin{align*}
&\sum_{y\in \mathbb{F}_{p}^{*}}\sum_{z\in \mathbb{F}_{p}^{*}}\sum_{x\in \mathbb{F}_{q}}\chi_1(byx^{2}+zx) & \\
&=\sum_{y \in F_{p}^{\ast}}\sum_{z \in F_{p}^{\ast}}\chi_{1}\left(-\frac{z^{2}}{4by}\right)\eta(by)G(\eta, \chi_{1}) & \\
&= \eta(b)G(\eta, \chi_{1})\sum_{y_1 \in F_{p}^{\ast}}\sum_{z \in F_{p}^{\ast}}\chi_{1}\left(-\frac{y_1z^{2}}{b}\right)\eta\left(\frac{1}{4y_1}\right)& \\
& =\eta(b)G(\eta, \chi_{1})\sum_{y \in F_{p}^{\ast}}\sum_{z \in F_{p}^{\ast}}\chi_{1}\left(-\frac{yz^{2}}{b}\right)\eta(y) &\\
& =\eta(b)G(\eta, \chi_{1})\sum_{z \in F_{p}^{\ast}}\sum_{y \in F_{p}^{\ast}}\zeta_{p}^{-yz^{2}\tr(b^{-1})}\eta(y) & \\
& =  \left \{
   \begin{array}{ll}
  \eta(b)G(\eta, \chi_{1})\sum_{z \in F_{p}^{\ast}}\sum_{y \in F_{p}^{\ast}}\eta(y) , & \textrm{if\ } \tr(b^{-1})=0, \\
  \eta(b)\eta(-\tr(b^{-1}))G(\eta, \chi_{1})\sum_{z \in F_{p}^{\ast}}\sum_{y \in F_{p}^{\ast}}
  \zeta_{p}^{yz^{2}\tr(b^{-1})}\eta\left(yz^{2}\tr(b^{-1})\right) ,  & \textrm{if\ } \tr(b^{-1})\neq 0.
\end{array}
  \right. \\
&=\left \{
   \begin{array}{ll}
  \eta(b)G(\eta, \chi_{1})\sum_{z \in F_{p}^{\ast}}\sum_{y \in F_{p}^{\ast}}\eta(y) , &\textrm{if\ }  \tr(b^{-1})=0, \\
  \eta(b)\eta\left(-
  \tr(b^{-1})\right)G(\eta, \chi_{1})\sum_{z \in F_{p}^{\ast}}\sum_{y \in F_{p}^{\ast}}
  \zeta_{p}^{y}\eta(y) ,&  \textrm{if } \tr(b^{-1})\neq 0.
\end{array}
  \right.\\
  &=\left\{\begin{array}{ll}
             0, & \textrm{if } m\equiv1\pmod{2} \textrm{\ and\ } \tr(b^{-1})=0, \\
             \eta(b)\eta(-\tr(b^{-1}))(p-1)G(\eta,\chi_1)G(\overline{\eta}, \overline{\chi}_1) & \textrm{if\ } m\equiv1\pmod{2} \textrm{\ and\ } \tr(b^{-1})\neq0, \\
             \eta(b)(p-1)^{2}G(\eta, \chi_1), & \textrm{if\ } m\equiv0\pmod{2}\textrm{\ and\ } \tr(b^{-1})=0, \\
             -\eta(b)(p-1)G(\eta, \chi_1), & \textrm{if\ } m\equiv0\pmod{2} \textrm{\ and\ } \tr(b^{-1})\neq0.
           \end{array}
           \right.
\end{align*}

Then the desired conclusion follows Lemma \ref{lem1}.
\end{proof}
\begin{Lem}\label{lem6}
Let
$$M=\{b\in \mathbb{F}_{q}^{*}: \eta(b)=-1\textrm{\ and\ } \tr(b)=0\}$$
and
$$N=\{b\in \mathbb{F}_{q}^{*}: \eta(b)=1\textrm{\ and\ } \tr(b)=0\}.$$
Then
\begin{enumerate}
\item  $$|M|=\left\{\begin{array}{ll}
                                                                           \frac{1}{2p}(q-p), & \textrm{if\ } m\equiv1\pmod{2}, \\
                                                                           \frac{1}{2p}(q-p)+\frac{1}{2}(-1)^{\frac{(p-1)m}{4}}(p-1)p^{\frac{m-2}{2}},  & \textrm{if\ } m\equiv0\pmod{2}.
                                                                         \end{array}
                                                                         \right.$$

\item  $$
|N|=\left\{\begin{array}{ll}
         \frac{1}{2p}(q-p),   & \textrm{if\ } m\equiv1\pmod{2},  \\
        \frac{1}{2p}(q-p)-\frac{1}{2}(-1)^{\frac{(p-1)m}{4}}(p-1)p^{\frac{m-2}{2}},    & \textrm{if\ } m\equiv0\pmod{2}.
         \end{array}
         \right.
$$
\end{enumerate}
\end{Lem}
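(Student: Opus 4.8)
The plan is to count the sets $M$ and $N$ by a standard character‑sum argument. Write $|M| + |N|$ as the number of $b \in \mathbb{F}_q^*$ with $\tr(b)=0$ and $\eta(b)\neq 0$; since $\tr(b)=0$ has exactly $p^{m-1}$ solutions in $\mathbb{F}_q$ and $b=0$ is one of them, we get $|M|+|N| = p^{m-1}-1$. To separate $|M|$ from $|N|$ I would compute the difference $|N|-|M| = \sum_{b\in \mathbb{F}_q^*,\ \tr(b)=0}\eta(b)$. Using the orthogonality relation \eqref{eq-orthogonality} to detect the condition $\tr(b)=0$ via $\frac{1}{p}\sum_{z\in\mathbb{F}_p}\zeta_p^{z\tr(b)}$, one obtains
$$
|N|-|M| = \frac{1}{p}\sum_{z\in\mathbb{F}_p}\sum_{b\in\mathbb{F}_q^*}\eta(b)\chi_1(zb)
= \frac{1}{p}\left(\sum_{b\in\mathbb{F}_q^*}\eta(b) + \sum_{z\in\mathbb{F}_p^*}\sum_{b\in\mathbb{F}_q}\eta(b)\chi_1(zb)\right),
$$
where the $z=0$ term vanishes by \eqref{eq-2.2}. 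For $z\in\mathbb{F}_p^*$, the substitution $b\mapsto z^{-1}b$ gives $\sum_{b\in\mathbb{F}_q}\eta(b)\chi_1(zb)=\eta(z^{-1})\sum_{b}\eta(b)\chi_1(b) = \eta(z)G(\eta,\chi_1)$ (using $\eta(z^{-1})=\eta(z)$). Hence $|N|-|M| = \frac{1}{p}\,G(\eta,\chi_1)\sum_{z\in\mathbb{F}_p^*}\eta(z)$.

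Now I invoke Lemma \ref{lem3} exactly as in the proof of Lemma \ref{lem4}: the inner sum $\sum_{z\in\mathbb{F}_p^*}\eta(z)$ is $0$ when $m$ is odd and $p-1$ when $m$ is even. Combined with the value of $G(\eta,\chi_1)$ from Lemma \ref{lem1}, namely $G(\eta,\chi_1)=(-1)^{m-1}\sqrt{-1}^{(p-1)^2m/4}\sqrt q$, the even case yields $|N|-|M| = \frac{p-1}{p}(-1)^{m-1}\sqrt{-1}^{(p-1)^2m/4}p^{m/2}$, which simplifies to $-(-1)^{(p-1)m/4}(p-1)p^{(m-2)/2}$ after noting that for even $m$ one has $(-1)^{m-1}=-1$ and $\sqrt{-1}^{(p-1)^2m/4}=(-1)^{(p-1)m/4}$ (since $(p-1)^2m/4 \equiv (p-1)m/2 \pmod 4$ when this exponent is manipulated appropriately — this is the identity already used implicitly in Lemma \ref{lem4} and Lemma \ref{lem5}). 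Solving the linear system $|M|+|N|=p^{m-1}-1$ and the computed value of $|N|-|M|$ then gives both formulas, with $p^{m-1}-1 = \frac{1}{p}(q-p) \cdot p/p \cdot$ — more precisely $p^{m-1}-1$, and one rewrites $\frac{p^{m-1}-1}{2}$; matching the stated form requires recognizing $\frac{q-p}{2p} = \frac{p^{m-1}-1}{2}$, which indeed holds.

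The only genuinely delicate point is the bookkeeping of the fourth‑root‑of‑unity factor: one must verify that $\frac{p-1}{p}G(\eta,\chi_1)$ collapses to the real quantity $\pm(p-1)p^{(m-2)/2}$ with the sign $(-1)^{(p-1)m/4}$, and that the split between $M$ (where $\eta=-1$) and $N$ (where $\eta=+1$) is assigned the correct sign — $|N|-|M|$ is positive in the direction that makes $|M|$ the one with the $+\frac12(-1)^{(p-1)m/4}(p-1)p^{(m-2)/2}$ correction, so I would double‑check the sign of $G(\eta,\chi_1)$ against the convention fixed in Lemma \ref{lem1}. For odd $m$ everything is immediate since the correction term is zero. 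I expect no other obstacles; the argument is a routine application of Gauss sums once the sign conventions are pinned down.
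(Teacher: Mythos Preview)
Your proposal is correct and follows essentially the same route as the paper: both reduce the count to evaluating $G(\eta,\chi_1)\sum_{y\in\mathbb{F}_p^*}\eta(y)$ via Lemmas~\ref{lem1} and~\ref{lem3}, and both use $|M|+|N|=p^{m-1}-1$ to pass from one set to the other. The only cosmetic difference is that the paper packages the computation in a single sum using the indicator $\tfrac{1}{2}(1-\eta(x))$ for nonsquares, whereas you compute $|M|+|N|$ and $|N|-|M|$ separately and solve the resulting linear system; your sign analysis for the even-$m$ case is correct once carried through.
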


\begin{proof}
 We only prove the first part of this lemma,  since $|M|+|N|=p^{m-1}-1$.

It follows from \eqref{eq-orthogonality}, \eqref{eq-2.2}, Lemmas \ref{lem1} and \ref{lem3} that
\begin{align*}
|M|&=\frac{1}{2p}\sum_{x \in F_{q}^{\ast}}\left(1-\eta(x)\right)\left(\sum_{y \in F_{p}}\zeta_{p}^{y\tr(x)}\right)& \\
& =\frac{1}{2p}\sum_{x \in F_{q}^{\ast}}\left(1-\eta(x)+\sum_{y \in F_{p}^{\ast}}\zeta_{p}^{y\tr(x)}-
\sum_{y \in F_{p}^{\ast}}\eta(x)\zeta_{p}^{y\tr(x)}\right) & \\
& =\frac{1}{2p}(q-1)-\frac{1}{2p}\sum_{x \in F_{q}^{\ast}}\eta(x)+\frac{1}{2p}\sum_{x \in F_{q}^{\ast}}\sum_{y \in F_{p}^{\ast}}\zeta_{p}^{y\tr(x)}-
\frac{1}{2p}\sum_{x \in F_{q}^{\ast}}\sum_{y \in F_{p}^{\ast}}\eta(x)\zeta_{p}^{y\tr(x)}& \\
& =\frac{1}{2p}(q-1)+\frac{1}{2p}\sum_{y \in F_{p}^{\ast}}\left(\sum_{x \in F_{q}}\zeta_{p}^{y\tr(x)}-1\right)-
\frac{1}{2p}\sum_{y \in F_{p}^{\ast}}\eta(y)\sum_{x \in F_{q}^{\ast}}\eta(yx)\zeta_{p}^{y\tr(x)} & \\
& =\frac{1}{2p}(q-1)-\frac{1}{2p}(p-1)-
\frac{1}{2p}G(\eta, \chi_{1})\sum_{y \in F_{p}^{\ast}}\eta(y) & \\
&=  \left \{
   \begin{array}{ll}
   \frac{1}{2p}(q-p), &\textrm{if\ }  m \equiv 1 \mod2, \\
 \frac{1}{2p}(q-p)+(-1)^{\frac{m(p-1)}{4}}\frac{1}{2}(p-1)p^{\frac{m-2}{2}},& \textrm{if\ } m \equiv 0 \mod2.
\end{array}
  \right.
\end{align*}
This completes the proof of the first conclusion of this lemma.
\end{proof}
\begin{Lem}\label{lem7}
For each $a\in \mathbb{F}_{p}^{*}$, let
$$
M_a=\{x\in \mathbb{F}_{q}: \eta(x)=-1 \textrm{\ and \ }\tr(x)=a\}.
$$
Then
$$
|M_a|=\left\{\begin{array}{ll}
             \frac{q}{2p}-\frac{1}{2}\eta(-a)(-1)^{\frac{(m+1)(p-1)}{4}}p^{\frac{m-1}{2}}, & \textrm{if\ } m\equiv1\pmod{2}, \\
             \frac{q}{2p}-\frac{1}{2}(-1)^{\frac{(p-1)m}{4}}p^{\frac{m-2}{2}},  & \textrm{if\ }  m\equiv0\pmod{2}.
           \end{array}
           \right.
$$
\end{Lem}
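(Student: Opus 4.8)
The plan is to turn $|M_a|$ into a character sum exactly as in the proof of Lemma~\ref{lem6}. Since $\tr(x)=a\neq0$ forces $x\neq0$ and $\eta(0)=0$, the set $M_a$ is detected by the product of the indicator $\tfrac12\bigl(1-\eta(x)\bigr)$ of $\{x:\eta(x)=-1\}$ with the indicator $\tfrac1p\sum_{y\in\mathbb{F}_p}\zeta_p^{y(\tr(x)-a)}$ of $\{\tr(x)=a\}$, so that
$$
|M_a|=\frac{1}{2p}\sum_{x\in\mathbb{F}_q}\bigl(1-\eta(x)\bigr)\sum_{y\in\mathbb{F}_p}\zeta_p^{y(\tr(x)-a)} .
$$
First I would peel off the term $y=0$ inside the inner sum, producing four sums. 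Two of them vanish at once: $\sum_{x\in\mathbb{F}_q}\eta(x)=0$ by \eqref{eq-2.2}, and $\sum_{y\in\mathbb{F}_p^{*}}\zeta_p^{-ya}\sum_{x\in\mathbb{F}_q}\chi_1(yx)=0$ by \eqref{eq-orthogonality}. The contribution of $y=0$ against the constant $1$ is $q/(2p)$. Hence everything collapses to
$$
|M_a|=\frac{q}{2p}-\frac{1}{2p}\sum_{y\in\mathbb{F}_p^{*}}\zeta_p^{-ya}\sum_{x\in\mathbb{F}_q}\eta(x)\chi_1(yx)
=\frac{q}{2p}-\frac{1}{2p}\,G(\eta,\chi_1)\sum_{y\in\mathbb{F}_p^{*}}\eta(y)\zeta_p^{-ya},
$$
where the substitution $x\mapsto x/y$ together with $\eta(y^{-1})=\eta(y)$ (as $\eta$ is $\{\pm1\}$-valued on $\mathbb{F}_q^{*}$) extracts the Gauss sum $G(\eta,\chi_1)$.

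The next step is to evaluate $S:=\sum_{y\in\mathbb{F}_p^{*}}\eta(y)\zeta_p^{-ya}$ by splitting on the parity of $m$ via Lemma~\ref{lem3}. If $m$ is even, then $\eta(y)=1$ for all $y\in\mathbb{F}_p^{*}$, so $S=\sum_{y\in\mathbb{F}_p^{*}}\zeta_p^{-ya}=-1$ because $a\neq0$. If $m$ is odd, then $\eta(y)=\overline{\eta}(y)$ on $\mathbb{F}_p^{*}$; putting $t=-ya$ and using $\overline{\eta}(-1/a)=\overline{\eta}(-a)$ gives $S=\overline{\eta}(-a)\,G(\overline{\eta},\overline{\chi}_1)$, and Lemma~\ref{lem3} once more yields $\overline{\eta}(-a)=\eta(-a)$. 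Substituting $S$ and the Gauss-sum evaluations of Lemma~\ref{lem1}, namely $G(\eta,\chi_1)=(-1)^{m-1}\sqrt{-1}^{\,(p-1)^2m/4}p^{m/2}$ and $G(\overline{\eta},\overline{\chi}_1)=\sqrt{-1}^{\,(p-1)^2/4}p^{1/2}$, into the displayed identity produces the asserted value of $|M_a|$, provided the powers of $\sqrt{-1}$ are reconciled with the powers of $-1$ in the statement.

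That reconciliation is the only genuinely delicate point, and I would dispatch it by setting $u=(p-1)/2$ and noting that $\sqrt{-1}^{\,u^2k}=\sqrt{-1}^{\,uk}=(-1)^{uk/2}$ whenever $uk$ is even, since then $u(u-1)k\equiv0\pmod4$ (one factor $2$ from the consecutive integers $u-1,u$, another from $k$). Applying this with $k=m$ in the even case gives $(-1)^{m-1}\sqrt{-1}^{\,u^2m}=-(-1)^{(p-1)m/4}$, hence $|M_a|=\tfrac{q}{2p}-\tfrac12(-1)^{(p-1)m/4}p^{(m-2)/2}$; applying it with $k=m+1$ in the odd case gives $\sqrt{-1}^{\,u^2(m+1)}=(-1)^{(m+1)(p-1)/4}$, hence $|M_a|=\tfrac{q}{2p}-\tfrac12\eta(-a)(-1)^{(m+1)(p-1)/4}p^{(m-1)/2}$. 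All remaining steps are routine manipulations with the orthogonality relations \eqref{eq-orthogonality} and \eqref{eq-2.2}.
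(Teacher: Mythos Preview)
Your proof is correct and follows essentially the same route as the paper: express $|M_a|$ via the character-sum indicator $\tfrac{1}{2p}\sum_x(1-\eta(x))\sum_y\zeta_p^{y(\tr(x)-a)}$, kill two of the four resulting sums with \eqref{eq-orthogonality} and \eqref{eq-2.2}, extract $G(\eta,\chi_1)$ from the surviving double sum, and then evaluate $\sum_{y\in\mathbb{F}_p^*}\eta(y)\zeta_p^{-ya}$ by splitting on the parity of $m$ via Lemma~\ref{lem3}. The paper reaches the same intermediate expressions $\tfrac{q}{2p}-\tfrac{1}{2p}\eta(-a)G(\eta,\chi_1)G(\overline{\eta},\overline{\chi}_1)$ (odd $m$) and $\tfrac{q}{2p}+\tfrac{1}{2p}G(\eta,\chi_1)$ (even $m$) and then simply quotes Lemma~\ref{lem1}; your explicit reconciliation of $\sqrt{-1}^{\,u^2k}$ with $(-1)^{uk/2}$ is a welcome addition the paper leaves implicit. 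One small wording fix: your claim ``$\sqrt{-1}^{\,u^2k}=\sqrt{-1}^{\,uk}$ whenever $uk$ is even'' should read ``whenever $k$ is even'', since your justification uses a factor $2$ from $k$ itself---but both of your applications have $k\in\{m,m+1\}$ even, so the argument is unaffected.
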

\begin{proof}
By \eqref{eq-orthogonality}, \eqref{eq-2.2} and Lemma \ref{lem1},  for any $a\in \mathbb{F}_{p}^{*}$ we have
\begin{align*}
 |M_{a}| &=\frac{1}{2p}\sum_{x \in F_{q}}\left(1-\eta(x)\right)\left(\sum_{y \in F_{p}}\zeta_{p}^{y(\tr(x)-a)}\right) & \\
  & =\frac{1}{2p}\sum_{x \in F_{q}}(1-\eta(x))+\frac{1}{2p}\sum_{x \in F_{q}}\sum_{y \in F_{p}^{\ast}}\zeta_{p}^{y(\tr(x)-a)}-\frac{1}{2p}
\sum_{x \in F_{q}}\sum_{y \in F_{p}^{\ast}}\eta(x)\zeta_{p}^{y(\tr(x)-a)} &\\
 & =\frac{q}{2p}-\frac{1}{2p}\sum_{x \in F_{q}}\eta(x)+\frac{1}{2p}\sum_{y \in F_{p}^{\ast}}\sum_{x \in F_{q}}\zeta_{p}^{y(\tr(x)-a)}-
\frac{1}{2p}\sum_{y \in F_{p}^{\ast}}\sum_{x \in F_{q}}\eta(x)\zeta_{p}^{y(\tr(x)-a)} & \\
 &=\frac{q}{2p}-
\frac{1}{2p}\sum_{y \in F_{p}^{\ast}}\eta(y)\zeta_{p}^{-ya}\sum_{x \in F_{q}^{\ast}}\eta(yx)\zeta_{p}^{y\tr(x)}  &\\
 &=\frac{q}{2p}-
\frac{1}{2p}\eta(-a)G(\eta, \chi_{1})\sum_{y \in F_{p}^{\ast}}\eta(-ay)\zeta_{p}^{-ya} & \\
 &=\left \{
   \begin{array}{ll}
   \frac{q}{2p}-
\frac{1}{2p}\eta(-a)G(\eta, \chi_{1})G(\overline{\eta}, \overline{\chi}_{1}),&  \textrm{if\ }  m \equiv 1 \mod2, \\
 \frac{q}{2p}+\frac{1}{2p}G(\eta, \chi_{1}),  & \textrm{if\ }  m \equiv 0 \mod2.
\end{array}
  \right.\\
 & =\left\{\begin{array}{ll}
            \frac{q}{2p}-\frac{1}{2}(-1)^{\frac{(m+1)(p-1)}{4}}\eta(-a)p^{\frac{m-1}{2}}, & \textrm{if\ } m\equiv1\pmod{2}, \\
            \frac{q}{2p}-\frac{1}{2}(-1)^{\frac{(p-1)m}{4}}p^{\frac{m-2}{2}},  & \textrm{if\ }  m\equiv0\pmod{2}.
          \end{array}
          \right.
\end{align*}
This completes the proof of this lemma.
\end{proof}
\begin{Lem}\label{lem8}
Let the symbols be the same as before. For each $b\in \mathbb{F}_{q}^{*}$, we have
\begin{enumerate}

\item if $m$ is even, then
$$
n_{(b,0)}=\left\{\begin{array}{ll}
             p^{m-2}-(-1)^{\frac{(p-1)m}{4}}(p-1)\eta(b)p^{\frac{m-2}{2}},  & \textrm{if\ }\tr(b^{-1})=0,  \\
             p^{m-2},  & \textrm{if\ } \tr(b^{-1})\neq0;
           \end{array}
           \right.
$$
\item if $m$ is odd, then
$$
n_{(b,0)}=\left\{\begin{array}{ll}
             p^{m-2},  & \textrm{if\ } \tr(b^{-1})=0, \\
             p^{m-2}+(-1)^{\frac{(p-1)(m+1)}{4}}(p-1)\eta(b)\eta\left(-\tr(b^{-1})\right)p^{\frac{m-3}{2}},  & \textrm{if\ } \tr(b^{-1})\neq0.
           \end{array}
           \right.
$$
\end{enumerate}
\end{Lem}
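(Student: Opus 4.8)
The plan is to specialize the master identity \eqref{eq-weight} to $a=0$ and then quote the two character-sum evaluations already established. When $a=0$ the factor $\zeta_p^{-za}$ equals $1$, and since the trace is $\mathbb{F}_p$-linear we may write $\zeta_p^{y\tr(bx^2)}=\chi_1(ybx^2)$, so \eqref{eq-weight} collapses to
$$
n_{(b,0)}=p^{m-2}+p^{-2}\sum_{y\in\mathbb{F}_p^{*}}\sum_{x\in\mathbb{F}_q}\chi_1(byx^2)+p^{-2}\sum_{y\in\mathbb{F}_p^{*}}\sum_{z\in\mathbb{F}_p^{*}}\sum_{x\in\mathbb{F}_q}\chi_1(byx^2+zx).
$$
The middle term is exactly the sum evaluated in Lemma \ref{lem4}, and the last term is exactly the sum evaluated in Lemma \ref{lem5}. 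Thus the entire proof consists of substituting those two closed forms and collecting terms; no new character-sum manipulation is required.

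I would then split on the parity of $m$. If $m$ is odd, Lemma \ref{lem4} shows the middle term vanishes, and Lemma \ref{lem5} gives $0$ when $\tr(b^{-1})=0$ and $\eta(b)\eta(-\tr(b^{-1}))(-1)^{(p-1)(m+1)/4}(p-1)p^{(m+1)/2}$ when $\tr(b^{-1})\neq0$; dividing the latter by $p^2$ yields the claimed $p^{m-2}+(-1)^{(p-1)(m+1)/4}(p-1)\eta(b)\eta(-\tr(b^{-1}))p^{(m-3)/2}$. If $m$ is even, Lemma \ref{lem4} contributes $(p-1)\eta(b)G(\eta,\chi_1)$, while Lemma \ref{lem5} contributes $\eta(b)(p-1)^2G(\eta,\chi_1)$ when $\tr(b^{-1})=0$ and $-\eta(b)(p-1)G(\eta,\chi_1)$ when $\tr(b^{-1})\neq0$. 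In the latter subcase the two contributions cancel, leaving $n_{(b,0)}=p^{m-2}$; in the former they combine to $p^{-2}\eta(b)G(\eta,\chi_1)\bigl((p-1)+(p-1)^2\bigr)=p^{-1}(p-1)\eta(b)G(\eta,\chi_1)$.

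The final step is to replace $G(\eta,\chi_1)$ by its explicit value from Lemma \ref{lem1}. For even $m$ one has $(-1)^{m-1}=-1$ and, writing $p-1=2k$ and $m=2\ell$, the fourth-root factor reduces via $k^2\equiv k\pmod 2$ as $\sqrt{-1}^{(p-1)^2m/4}=(-1)^{k^2\ell}=(-1)^{k\ell}=(-1)^{(p-1)m/4}$, so $G(\eta,\chi_1)=-(-1)^{(p-1)m/4}p^{m/2}$; substituting this turns $p^{-1}(p-1)\eta(b)G(\eta,\chi_1)$ into $-(-1)^{(p-1)m/4}(p-1)\eta(b)p^{(m-2)/2}$, exactly the asserted expression (the companion identity $G(\eta,\chi_1)G(\overline{\eta},\overline{\chi}_1)=(-1)^{(p-1)(m+1)/4}p^{(m+1)/2}$ needed in the odd case is already built into the proof of Lemma \ref{lem5}). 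The only place demanding care is this bookkeeping with powers of $\sqrt{-1}$; beyond that the lemma is a direct corollary of Lemmas \ref{lem4} and \ref{lem5}, so I anticipate no genuine obstacle.
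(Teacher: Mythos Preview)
Your proposal is correct and follows exactly the same route as the paper: specialize \eqref{eq-weight} to $a=0$ and then invoke Lemmas~\ref{lem4} and~\ref{lem5}. In fact the paper's own proof stops after writing down the three-term decomposition and simply says ``the desired conclusions then follow from Lemmas~\ref{lem4} and~\ref{lem5},'' whereas you have helpfully carried out the substitution, the cancellation in the even-$m$, $\tr(b^{-1})\neq0$ subcase, and the reduction of the $\sqrt{-1}$ power via $k^2\equiv k\pmod 2$; all of this is accurate and nothing is missing.
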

\begin{proof}
From the equation \eqref{eq-weight}, we get
\begin{align*}
n_{(b,0)}=p^{m-2} + p^{-2}\sum_{y \in F_{p}^{\ast}}\sum_{x \in F_{q}}\zeta_{p}^{y\tr(bx^{2})}
+ p^{-2}\sum_{y \in F_{p}^{\ast}}\sum_{z \in F_{p}^{\ast}}\sum_{x \in F_{q}}\zeta_{p}^{y\tr(bx^{2})+z\tr(x)}.  \\
\end{align*}

The desired conclusions then follow from Lemmas  \ref{lem4} and \ref{lem5}.
\end{proof}
Next, we introduce the  class of linear codes of \eqref{defcode} with three weights.
For $a\in \mathbb{F}_{p}^{*}$, put
\begin{equation}\label{def-d2}
D_a=\{x\in \mathbb{F}_{q}^{*}: \tr(x)=a\}.
\end{equation}

To get the weight distribution of $\mathcal{C}_{D_a}$ constructed from the  set $D_a$ of \eqref{def-d2}, we need a number of auxiliary results.

\begin{Lem}\label{lem12}
For any $a\in \mathbb{F}_{p}^{*}$, let
$$ A=\sum_{y \in F_{p}^{\ast}}\sum_{z \in F_{p}^{\ast}}
\zeta_{p}^{-za}\sum_{x\in \mathbb{F}_{q}}\zeta_{p}^{\tr(byx^{2}+zx)}. $$
Then for $ b \in F_{q}^{\ast}$ we have
$$
A= \left \{
   \begin{array}{ll}
   0,& \text{if\ } m \equiv 1 \pmod2, \tr(b^{-1})=0, \\
    -(-1)^{\frac{(m+1)(p-1)}{4}}\eta(b)\eta(-\tr(b^{-1}))p^{\frac{m+1}{2}},
    &  \text{if\ } m \equiv 1 \pmod2, \tr(b^{-1}) \neq 0, \\
    (-1)^{(\frac{p-1}{2})^{2}\frac{m}{2}}\eta(b)(p-1)p^{\frac{m}{2}}, & \text{if\ } m \equiv 0 \pmod2, \tr(b^{-1})=0, \\
     -(-1)^{(\frac{p-1}{2})^2\frac{m}{2}}\eta(b)p^{\frac{m}{2}}, & \text{if\ }m \equiv 0 \pmod2, \tr(b^{-1}) \neq 0,
\end{array}
  \right.
  $$
  where $G(\eta, \chi_1)$ and $G(\overline{\eta}, \overline{\chi}_{1})$ denote the square Gauss sum over $\mathbb{F}_{q}$ and $\mathbb{F}_{p}$, respectively.
\end{Lem}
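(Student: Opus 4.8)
The quantity $A$ is almost exactly the triple sum appearing in Lemma~\ref{lem5}, except each term carries the extra phase factor $\zeta_p^{-za}$. So the plan is to retrace the computation of Lemma~\ref{lem5}, inserting $\zeta_p^{-za}$ and keeping track of the $z$-sum that no longer trivially evaluates. First I would apply Lemma~\ref{lem2} to the inner sum over $x\in\mathbb{F}_q$ of $\chi_1(byx^2+zx)$, obtaining $\chi_1\!\left(-\tfrac{z^2}{4by}\right)\eta(by)G(\eta,\chi_1)$, so that $A=\eta(b)G(\eta,\chi_1)\sum_{y\in\mathbb{F}_p^*}\sum_{z\in\mathbb{F}_p^*}\eta(y)\zeta_p^{-za}\zeta_p^{-z^2\tr(b^{-1})/(4y)}$. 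After the substitution $y\mapsto (4y)^{-1}$ (which fixes $\eta(y)$ up to the constant $\eta(4)=1$, since $4$ is a square) this becomes $\eta(b)G(\eta,\chi_1)\sum_{z\in\mathbb{F}_p^*}\zeta_p^{-za}\sum_{y\in\mathbb{F}_p^*}\eta(y)\zeta_p^{-yz^2\tr(b^{-1})}$.

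Now I would split on whether $\tr(b^{-1})=0$. If $\tr(b^{-1})=0$ the inner $y$-sum is $\sum_{y\in\mathbb{F}_p^*}\eta(y)=0$ (by~\eqref{eq-2.2}), hence $A=0$; but I must double-check this is consistent with the claimed value in the even case, which is nonzero, so in fact when $m$ is even $\eta(y)$ is identically $1$ on $\mathbb{F}_p^*$ by Lemma~\ref{lem3}(1), and then the $y$-sum is $p-1$, giving $A=\eta(b)G(\eta,\chi_1)(p-1)\sum_{z\in\mathbb{F}_p^*}\zeta_p^{-za}=\eta(b)G(\eta,\chi_1)(p-1)(-1)=-(p-1)\eta(b)G(\eta,\chi_1)$. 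Using Lemma~\ref{lem1} with $m$ even to write $G(\eta,\chi_1)=(-1)^{m-1}\sqrt{-1}^{(p-1)^2m/4}\sqrt q=-(-1)^{(\frac{p-1}{2})^2 m/2}p^{m/2}$ (noting $(-1)^{m-1}=-1$ and $\sqrt{-1}^{(p-1)^2m/4}=(-1)^{(p-1)^2m/8}=(-1)^{(\frac{p-1}{2})^2 m/2}$) yields $A=(-1)^{(\frac{p-1}{2})^2 m/2}\eta(b)(p-1)p^{m/2}$, matching the third case. The first case ($m$ odd, $\tr(b^{-1})=0$) follows since then $\eta$ restricted to $\mathbb{F}_p^*$ is $\overline\eta$ and $\sum\overline\eta(y)=0$.

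For $\tr(b^{-1})\neq0$ I would pull $\eta(-z^2\tr(b^{-1}))$ out of the $y$-sum: writing $t=-yz^2\tr(b^{-1})$, $\sum_{y\in\mathbb{F}_p^*}\eta(y)\zeta_p^{-yz^2\tr(b^{-1})}=\eta(-z^2\tr(b^{-1}))\sum_{t\in\mathbb{F}_p^*}\eta(t)\zeta_p^{t}=\eta(-\tr(b^{-1}))\,\eta(z^2)\,G(\overline\eta,\overline\chi_1)$ when $m$ is odd (so $\eta=\overline\eta$ on $\mathbb{F}_p^*$), while $\eta(z^2)=1$. This gives $A=\eta(b)\eta(-\tr(b^{-1}))G(\eta,\chi_1)G(\overline\eta,\overline\chi_1)\sum_{z\in\mathbb{F}_p^*}\zeta_p^{-za}=-\eta(b)\eta(-\tr(b^{-1}))G(\eta,\chi_1)G(\overline\eta,\overline\chi_1)$. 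Plugging in the Gauss sum values from Lemma~\ref{lem1} and simplifying the power of $\sqrt{-1}$ (here $G(\eta,\chi_1)G(\overline\eta,\overline\chi_1)=(-1)^{m-1}\sqrt{-1}^{(p-1)^2(m+1)/4}p^{(m+1)/2}$, and $(-1)^{m-1}=1$ for $m$ odd) produces the second case $-(-1)^{(m+1)(p-1)/4}\eta(b)\eta(-\tr(b^{-1}))p^{(m+1)/2}$. For $m$ even and $\tr(b^{-1})\neq0$, instead $\eta\equiv1$ on $\mathbb{F}_p^*$ forces $\sum_{y\in\mathbb{F}_p^*}\eta(y)\zeta_p^{-yz^2\tr(b^{-1})}=\sum_{y\in\mathbb{F}_p^*}\zeta_p^{-yz^2\tr(b^{-1})}=-1$, so $A=-\eta(b)G(\eta,\chi_1)\sum_{z\in\mathbb{F}_p^*}\zeta_p^{-za}=\eta(b)G(\eta,\chi_1)$, which after Lemma~\ref{lem1} gives $-(-1)^{(\frac{p-1}{2})^2 m/2}\eta(b)p^{m/2}$, the fourth case.

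The main obstacle is bookkeeping the powers of $\sqrt{-1}$: one must carefully reconcile $\sqrt{-1}^{(p-1)^2 m/4}$ and $\sqrt{-1}^{(p-1)^2(m+1)/4}$ with the exponents $(\tfrac{p-1}{2})^2\tfrac m2$ and $\tfrac{(m+1)(p-1)}{4}$ written in the statement, checking parity case by case ($p\equiv1$ vs.\ $p\equiv3\pmod4$, $m$ even vs.\ odd), and likewise verify that the substitution $y\mapsto(4y)^{-1}$ does not alter $\eta(y)$ since $4$ and inversion are square-preserving. Everything else is a direct transcription of the Lemma~\ref{lem5} argument with the character value $\sum_{z\in\mathbb{F}_p^*}\zeta_p^{-za}=-1$ replacing the factor $(p-1)$ that occurred there.
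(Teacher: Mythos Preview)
Your proposal is correct and follows essentially the same route as the paper: apply Lemma~\ref{lem2} to the inner $x$-sum, perform the substitution $y\mapsto(4y)^{-1}$, split on $\tr(b^{-1})=0$ or not, use Lemma~\ref{lem3} to handle $\sum_{y\in\mathbb{F}_p^*}\eta(y)$ (or the corresponding Gauss sum) according to the parity of $m$, evaluate $\sum_{z\in\mathbb{F}_p^*}\zeta_p^{-za}=-1$, and finish with Lemma~\ref{lem1}. The only cosmetic difference is that you invoke Lemma~\ref{lem3} inside each branch whereas the paper applies it once at the final step; the bookkeeping of the $\sqrt{-1}$ powers you flag as an obstacle is exactly what the paper leaves implicit.
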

\begin{proof}
By Lemma \ref{lem2}, we have that
\begin{align*}
A&=\sum_{y \in F_{p}^{\ast}}\sum_{z \in F_{p}^{\ast}}\zeta_{p}^{-za}\chi_{1}\left(-\frac{z^{2}}{4by}\right)\eta(by)G(\eta, \chi_{1})& \\
&=\eta(b)G(\eta, \chi_{1})\sum_{y \in F_{p}^{\ast}}\sum_{z \in F_{p}^{\ast}}\zeta_{p}^{-za}\chi_{1}\left(-\frac{yz^{2}}{b}\right)\eta\left(\frac{1}{4y}\right)& \\
&=\eta(b)G(\eta, \chi_{1})\sum_{y \in F_{p}^{\ast}}\sum_{z \in F_{p}^{\ast}}\zeta_{p}^{-za}\chi_{1}\left(-\frac{yz^{2}}{b}\right)\eta(y) &\\
&=\eta(b)G(\eta, \chi_{1})\sum_{z \in F_{p}^{\ast}}\zeta_p^{-za}\sum_{y \in F_{p}^{\ast}}\zeta_{p}^{-yz^{2}\tr(b^{-1})}\eta(y) & \\
&=  \left \{
   \begin{array}{ll}
  \eta(b)G(\eta, \chi_{1})\sum_{z \in F_{p}^{\ast}}\zeta_{p}^{-za}\sum_{y \in F_{p}^{\ast}}\eta(y) ,&\textrm{if\ }\tr(b^{-1})=0, \\
  \eta(b)\eta(-\tr(b^{-1}))G(\eta, \chi_{1})\sum_{z \in F_{p}^{\ast}}\zeta_{p}^{-za}\sum_{y \in F_{p}^{\ast}}
  \zeta_{p}^{yz^{2}\tr(b^{-1})}\eta\left(yz^{2}\tr(b^{-1})\right) ,&  \textrm{if\ } \tr\left(b^{-1}\right)\neq 0.
\end{array}
  \right.  \\
&= \left \{
   \begin{array}{ll}
  \eta(b)G(\eta, \chi_{1})\sum_{z \in F_{p}^{\ast}}\zeta_{p}^{-za}\sum_{y \in F_{p}^{\ast}}\eta(y) ,& \textrm{if\ } \tr(b^{-1})=0, \\
  \eta(b)\eta(-\tr(b^{-1}))G(\eta, \chi_{1})\sum_{z \in F_{p}^{\ast}}\zeta_{p}^{-za}\sum_{y \in F_{p}^{\ast}}
  \zeta_{p}^{y}\eta(y) ,& \textrm{if\ }\tr(b^{-1})\neq 0.
\end{array}
  \right.
  \end{align*}
 For $a\in \mathbb{F}_{p}^{*}$,  it is easy to check that
  $$ \sum_{z \in F_{p}^{\ast}}\zeta_{p}^{-za}=-1. $$
Then the results follow from Lemmas \ref{lem1} and \ref{lem3}.
\end{proof}
The following lemma follows directly from \eqref{eq-weight}, Lemmas \ref{lem4} and \ref{lem12}.
\begin{Lem}\label{lem13}
Let the symbols be the same as before. Then for $a\in \mathbb{F}_{p}^{*}$ and $b\in \mathbb{F}_{q}^{*}$
$$
n_{(b,a)}=\left\{\begin{array}{ll}
                   p^{m-2}, & \textrm{if\ } \tr(b^{-1})=0, \\
                   p^{m-2}-(-1)^{\frac{(m+1)(p-1)}{4}}\eta(b)\eta(-\tr(b^{-1}))p^{\frac{m-3}{2}}, , & \textrm{if\ }  m\equiv1\pmod{2},  \tr(b^{-1})\neq0, \\
                   p^{m-2}-(-1)^{(\frac{p-1}{2})^{2}\frac{m}{2}}p^{\frac{m-2}{2}}\eta(b), & \textrm{if\ } m\equiv0\pmod{2}, \tr(b^{-1})\neq0.
                 \end{array}
                 \right.
$$
\end{Lem}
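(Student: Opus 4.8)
The plan is to read off $n_{(b,a)}$ by substituting the already-computed quantities into the last line of \eqref{eq-weight}, which reads
$$
n_{(b,a)}=p^{m-2}+p^{-2}\sum_{y\in\mathbb{F}_{p}^{*}}\sum_{x\in\mathbb{F}_{q}}\zeta_{p}^{y\tr(bx^{2})}+p^{-2}A,
$$
where $A$ is exactly the triple sum evaluated in Lemma \ref{lem12}. Since $y\in\mathbb{F}_{p}$ we have $\zeta_{p}^{y\tr(bx^{2})}=\chi_{1}(byx^{2})$, so the middle term is $p^{-2}$ times the quantity evaluated in Lemma \ref{lem4}, and the last term is $p^{-2}$ times the value of $A$ supplied by Lemma \ref{lem12}. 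Thus no new idea is required; the proof is bookkeeping with signs and Gauss sums.

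First suppose $m$ is odd. By Lemma \ref{lem4} the middle term vanishes, so $n_{(b,a)}=p^{m-2}+p^{-2}A$. If $\tr(b^{-1})=0$ then $A=0$ and $n_{(b,a)}=p^{m-2}$. If $\tr(b^{-1})\neq0$ then Lemma \ref{lem12} gives $p^{-2}A=-(-1)^{(m+1)(p-1)/4}\eta(b)\eta(-\tr(b^{-1}))p^{(m+1)/2-2}$, and $p^{(m+1)/2-2}=p^{(m-3)/2}$ yields precisely the second branch of the asserted formula.

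Next suppose $m$ is even. Now the middle term is $p^{-2}(p-1)\eta(b)G(\eta,\chi_{1})$, and by Lemma \ref{lem1}, using $(-1)^{m-1}=-1$ and reducing $\sqrt{-1}^{(p-1)^{2}m/4}$ to $(-1)^{((p-1)/2)^{2}m/2}$ since $m$ is even, we have $G(\eta,\chi_{1})=-(-1)^{((p-1)/2)^{2}m/2}p^{m/2}$; hence the middle term equals $-(-1)^{((p-1)/2)^{2}m/2}(p-1)\eta(b)p^{m/2-2}$. The main step is then to add this to $p^{-2}A$ from the even-$m$ cases of Lemma \ref{lem12}. When $\tr(b^{-1})=0$ we get $p^{-2}A=(-1)^{((p-1)/2)^{2}m/2}(p-1)\eta(b)p^{m/2-2}$, which cancels the middle term exactly, so $n_{(b,a)}=p^{m-2}$. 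When $\tr(b^{-1})\neq0$ we get $p^{-2}A=-(-1)^{((p-1)/2)^{2}m/2}\eta(b)p^{m/2-2}$; combining the two $p^{m/2-2}$-order contributions, the coefficient of $(-1)^{((p-1)/2)^{2}m/2}\eta(b)p^{m/2-2}$ collapses as $-(p-1)-1=-p$, so $n_{(b,a)}=p^{m-2}-(-1)^{((p-1)/2)^{2}m/2}\eta(b)p^{(m-2)/2}$, which is the third branch.

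The only point needing care is the sign arithmetic in the even case: one must keep the factor $(-1)^{((p-1)/2)^{2}m/2}$ coming from Lemma \ref{lem1}'s evaluation of $G(\eta,\chi_{1})$ consistent with the factor appearing in Lemma \ref{lem12}'s evaluation of $A$, so that the cancellation when $\tr(b^{-1})=0$ is exact and the collapse $(p-1)+1\to p$ when $\tr(b^{-1})\neq0$ is correct. In particular I would verify the identity $\sqrt{-1}^{(p-1)^{2}m/4}=(-1)^{((p-1)/2)^{2}m/2}$ for even $m$ before committing to the final display; apart from that, the statement drops out mechanically.
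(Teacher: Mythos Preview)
Your proof is correct and follows exactly the route indicated in the paper: substitute the values from Lemma~\ref{lem4} and Lemma~\ref{lem12} into the last line of \eqref{eq-weight}. Your detailed sign-and-power bookkeeping in the even-$m$ case (the cancellation when $\tr(b^{-1})=0$ and the collapse $-(p-1)-1=-p$ when $\tr(b^{-1})\neq0$) is correct and is precisely what the paper's one-line proof is suppressing.
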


By Lemmas \ref{lem7} and \ref{lem13}, if $m>2$,  for $a\in \mathbb{F}_{p}^{*}$ we obtain
$$ n_{(b,a)} \in \left\{p^{m-2},
p^{m-2}-p^{\frac{m-2}{2}},p^{m-2}+p^{\frac{m-2}{2}}\right\} $$
or  $$ n_{(b,a)} \in \left\{p^{m-2},
p^{m-2}-p^{\frac{m-3}{2}},p^{m-2}+p^{\frac{m-3}{2}}\right\} $$
according as $m$ is even or odd, respectively.

Now, for $a\in \mathbb{F}_{p}^{*}$ and the set $D_a$ of \eqref{def-d2},  we are ready to determine the weight distribution of the code $\mathcal{C}_{D_a}$ of \eqref{defcode}.

\begin{table}[ht]
\centering
\caption{The weight distribution of the codes of Theorem \ref{Thm3}}\label{tal:weightdistribution3}
\begin{tabular}{|c|c|}
\hline
\textrm{Weight}  \qquad& \textrm{Multiplicity}    \\
\hline
0 \qquad&   1  \\
\hline
$(p-1)p^{m-2}$ \qquad&  $p^{m-1}-1$  \\
\hline
$(p-1)p^{m-2}+p^{\frac{m-3}{2}}$  \qquad& $\frac{1}{2}(p-1)(p^{m-1}+p^{\frac{m-1}{2}})$ \\
\hline
$(p-1)p^{m-2}-p^{\frac{m-3}{2}}$ \qquad&  $\frac{1}{2}(p-1)(p^{m-1}-p^{\frac{m-1}{2}})$  \\
\hline
\end{tabular}
\end{table}
\begin{Thm}\label{Thm3}
For $a\in \mathbb{F}_{p}^{*}$, if $m$ is odd, then the linear code $\mathcal{C}_{D_a}$ over $\mathbb{F}_{p}$ has parameters $[p^{m-1}, m]$  and weight distribution in \autoref{tal:weightdistribution3}.
\end{Thm}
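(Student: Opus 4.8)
The plan is to compute the weight distribution of $\mathcal{C}_{D_a}$ directly from the Hamming-weight formula \eqref{eq-wt}, namely $\wt(\mathbf{c}_b) = p^{m-1} - n_{(b,a)}$, by counting how many $b \in \mathbb{F}_q^{*}$ give each possible value of $n_{(b,a)}$. From Lemma \ref{lem13} with $m$ odd, we have three cases: $n_{(b,a)} = p^{m-2}$ when $\tr(b^{-1}) = 0$; and $n_{(b,a)} = p^{m-2} \mp (-1)^{\frac{(m+1)(p-1)}{4}}\eta(b)\eta(-\tr(b^{-1})) p^{\frac{m-3}{2}}$ when $\tr(b^{-1}) \neq 0$. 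So the weights are $(p-1)p^{m-2}$, $(p-1)p^{m-2} + p^{\frac{m-3}{2}}$, and $(p-1)p^{m-2} - p^{\frac{m-3}{2}}$, matching \autoref{tal:weightdistribution3}. It remains to count the multiplicities, and since each nonzero codeword $\mathbf{c}_b$ is counted once (the map $b \mapsto \mathbf{c}_b$ being injective, as the code has dimension $m$ — this should be verified, e.g. by noting the weights are all positive so the kernel is trivial), the three multiplicities sum to $p^{m-1} - 1$.

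First I would count $A_0' := |\{b \in \mathbb{F}_q^{*} : \tr(b^{-1}) = 0\}|$. Since $b \mapsto b^{-1}$ is a bijection of $\mathbb{F}_q^{*}$, this equals $|\{c \in \mathbb{F}_q^{*} : \tr(c) = 0\}| = p^{m-1} - 1$ by \eqref{eq-na}. That gives the multiplicity of the weight $(p-1)p^{m-2}$, in agreement with the table. Next, for the other two weights, I need to count, for $\varepsilon = \pm 1$, the set $S_\varepsilon := \{b \in \mathbb{F}_q^{*} : \tr(b^{-1}) \neq 0,\ (-1)^{\frac{(m+1)(p-1)}{4}}\eta(b)\eta(-\tr(b^{-1})) = \varepsilon\}$. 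Substituting $b = c^{-1}$ and using $\eta(c^{-1}) = \eta(c)$, this is equivalent to counting $c \in \mathbb{F}_q^{*}$ with $\tr(c) \neq 0$ and $\eta(c)\eta(-\tr(c))$ equal to a fixed sign. Writing $\tr(c) = a' \in \mathbb{F}_p^{*}$ and using Lemma \ref{lem3}(2) ($m$ odd, so $\eta(-a') = \overline{\eta}(-a')$ as $-a' \in \mathbb{F}_p^{*}$), the condition $\eta(c)\eta(-\tr(c)) = \pm 1$ becomes a condition on $\eta(c)$ that depends on the residue class $a'$; summing over $a' \in \mathbb{F}_p^{*}$ and using Lemma \ref{lem7}, which gives $|M_{a'}| = |\{c : \eta(c) = -1, \tr(c) = a'\}| = \frac{q}{2p} - \frac{1}{2}\eta(-a')(-1)^{\frac{(m+1)(p-1)}{4}}p^{\frac{m-1}{2}}$, should collapse the count.

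Concretely, for each $a' \in \mathbb{F}_p^{*}$ the set $\{c : \tr(c) = a'\}$ has $p^{m-1}$ elements, of which $|M_{a'}|$ have $\eta(c) = -1$, and $|M_{a'}| + 1$ have $\eta(c) = 1$ (one of the $p^{m-1}$ elements could in principle be $0$, but $\tr(0) = 0 \neq a'$, so actually $p^{m-1} - |M_{a'}|$ have $\eta(c) = 1$; I'd be careful here that none of these $c$ is zero, which holds since $a' \neq 0$). Then I'd split the sum over $a' \in \mathbb{F}_p^{*}$ according to the sign of $\overline{\eta}(-a')$: the $\frac{p-1}{2}$ values with $\overline{\eta}(-a') = 1$ and the $\frac{p-1}{2}$ with $\overline{\eta}(-a') = -1$. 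For each group, the number of $c$ achieving $\eta(c)\eta(-\tr(c)) = \varepsilon$ is either $|M_{a'}|$ or $p^{m-1} - |M_{a'}|$, and the $\pm\frac{1}{2}\eta(-a')(\cdots)p^{\frac{m-1}{2}}$ terms either reinforce or cancel in a controlled way. Adding everything up should yield $\frac{1}{2}(p-1)(p^{m-1} + p^{\frac{m-1}{2}})$ for one sign and $\frac{1}{2}(p-1)(p^{m-1} - p^{\frac{m-1}{2}})$ for the other, which is exactly \autoref{tal:weightdistribution3}. As a sanity check I would verify $(p^{m-1}-1) + \frac{1}{2}(p-1)(p^{m-1}+p^{\frac{m-1}{2}}) + \frac{1}{2}(p-1)(p^{m-1}-p^{\frac{m-1}{2}}) = (p^{m-1}-1) + (p-1)p^{m-1} = p^m - 1 = q - 1$, confirming consistency with the injectivity of $b \mapsto \mathbf{c}_b$.

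The main obstacle is the bookkeeping in this last step: correctly matching the sign $(-1)^{\frac{(m+1)(p-1)}{4}}$ appearing in Lemma \ref{lem13} with the sign in Lemma \ref{lem7}, and keeping straight which of $\varepsilon = +1$ and $\varepsilon = -1$ corresponds to the $+p^{\frac{m-3}{2}}$ weight versus the $-p^{\frac{m-3}{2}}$ weight. The arithmetic of $\overline{\eta}(-a') = \overline{\eta}(-1)\overline{\eta}(a')$ together with $\overline{\eta}(-1) = (-1)^{\frac{p-1}{2}}$ needs to be tracked, and the fact that the $p^{\frac{m-1}{2}}$-order correction terms cancel between the two halves $\{a' : \overline{\eta}(a') = 1\}$ and $\{a' : \overline{\eta}(a') = -1\}$ when they should, and add when they should, is where an error would most plausibly creep in. Everything else is routine substitution and application of the already-established lemmas.
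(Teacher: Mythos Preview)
Your proposal is correct and follows essentially the same route as the paper's proof. The paper also reduces, via the substitution $b\mapsto b^{-1}$, to counting the sets $T=\{x\in\mathbb{F}_q:\eta(x)\eta(\tr(x))=-1\}$ and $\overline{T}=\{x\in\mathbb{F}_q:\eta(x)\eta(\tr(x))=1\}$ by summing $|M_{a'}|$ from Lemma~\ref{lem7} over $a'\in\mathbb{F}_p^*$ according to the sign of $\overline{\eta}(a')$, and then reads off the multiplicities from Lemma~\ref{lem13}; your version is the same computation with the extra factor $\eta(-1)$ and the sign $(-1)^{\frac{(m+1)(p-1)}{4}}$ kept explicit rather than absorbed at the end.
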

\begin{proof}
If $m\equiv1\pmod{2}$, by Lemma \ref{lem7}, for $a\in \mathbb{F}_{p}^{*}$ we have that
\begin{align*}
 |M_{a}|&=\left|\{ x \in F_{q}: \eta(x)=-1 \textrm{\ and\ } \tr(x)=a \}\right|\\
 &=\frac{q}{2p}-
\frac{1}{2p}\eta(-a)G(\eta, \chi_{1})G(\overline{\eta}, \overline{\chi}_{1}).
\end{align*}
 Let $$ \overline{M_{a}}=\left\{ x \in F_{q}: \eta(x)=1 \textrm{\ and\ } \tr(x)=a \right\}, $$
$$ S=\{ x \in F_{q}: \eta(x)=-1 \textrm{\ and\ } \eta\left(\tr(x)\right)=1 \}, $$ and $$ \overline{S}=\{ x \in F_{q}: \eta(x)=1 \textrm{\ and\ } \eta\left(\tr(x)\right)=-1 \}. $$
Notice that half of the elements in $ F_{p}^{\ast} $ are squares. Hence we get
$$ |S|=\frac{p-1}{2}\left(\frac{q}{2p}-
\frac{1}{2p}\eta(-1)G(\eta,\chi_1)G(\overline{\eta},\overline{\chi}_1)\right) $$ and
\begin{align*}
 |\overline{S}|=\frac{p-1}{2}\left(p^{m-1}-\frac{q}{2p}+
\frac{1}{2p}\eta(-1)(-1)G(\eta, \chi_{1})G(\overline{\eta}, \overline{\chi}_{1})\right).
\end{align*}
Set $$ T=\{ x \in F_{q}: \eta(x)\eta( \tr(x))=-1 \},$$ and $$\overline{T}=\{ x \in F_{q}: \eta(x)\eta( \tr(x))=1 \}. $$
By definition,  we obtain
 $$ |T|=|S|+|\overline{S}|=\frac{p-1}{2}\left(p^{m-1}-\frac{1}{p}\eta(-1)G(\eta,\chi_1)G(\overline{\eta}, \overline{\chi}_1)\right). $$
and
$$ |\overline{T}|=p^{m}-p^{m-1}-|T|= \frac{p-1}{2}\left(p^{m-1}+\frac{1}{p}\eta(-1)G(\eta,\chi_1)G(\overline{\eta},\overline{\chi}_1)\right)$$
Note that $ \eta(-1)=(-1)^{\frac{p-1}{2}}$
and $ G(\eta, \chi_{1})G(\overline{\eta}, \overline{\chi}_{1})=(-1)^{\frac{(m+1)(p-1)}{4}}p^{\frac{m+1}{2}}. $
Then the weight distribution of \autoref{tal:weightdistribution3} follows from  \eqref{eq-wt} and  Lemma \ref{lem13}. It can be easily checked that $\wt(\mathbf{c}_b)>0$ for $b\in \mathbb{F}_{q}^{*}$. Hence, the dimension of this code $\mathcal{C}_{D_a}$ of Theorem \ref{Thm3} is equal to $m$.
\end{proof}

\begin{Exp}
Let $(p,m,a)=(5,3,1)$. Then the corresponding code $\mathcal{C}_{D_1}$ has parameters $[ 25,3,19]$ and weight enumerator
$1+40x^{19}+24x^{20}+60x^{21}$. Remark that this code is almost optimal, since an optimal  $[25,3]$ code has minimum distance $20$.
\end{Exp}
\begin{Exp}
Let $(p,m,a)=(3,3,1)$. Then the corresponding code $\mathcal{C}_{D_1}$ has parameters $[9,3,5]$ and weight enumerator $1+6x^{5}+8x^{6}+12x^7$.  Remark that this code is almost optimal, since  an optimal  $[9,3]$ code has minimum distance $6$.
\end{Exp}
\begin{table}[ht]
\centering
\caption{The weight distribution of the codes of Theorem \ref{Thm4}}\label{tal:weightdistribution4}
\begin{tabular}{|c|c|}
\hline
\textrm{Weight}  \qquad& \textrm{Multiplicity}    \\
\hline
0 \qquad&   1  \\
\hline
$(p-1)p^{m-2}$ \qquad&  $p^{m-1}-1$  \\
\hline
$(p-1)p^{m-2}+p^{\frac{m-2}{2}}$  \qquad& $\frac{1}{2}(p-1)(p^{m-1}+p^{\frac{m-2}{2}})$ \\
\hline
$(p-1)p^{m-2}-p^{\frac{m-2}{2}}$ \qquad&  $\frac{1}{2}(p-1)(p^{m-1}-p^{\frac{m-2}{2}})$   \\
\hline
\end{tabular}
\end{table}

\begin{Thm}\label{Thm4}
For $a\in \mathbb{F}_{p}^{*}$, if $m$ is even, then the linear code $\mathcal{C}_{D_a}$ over $\mathbb{F}_{p}$ has parameters $[p^{m-1}, m]$  and weight distribution in \autoref{tal:weightdistribution4}.
\end{Thm}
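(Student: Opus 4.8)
The plan is to repeat the argument used for Theorem~\ref{Thm3}, but feeding in the even-$m$ inputs of Lemmas~\ref{lem7} and~\ref{lem13} in place of the odd-$m$ ones. Put $\varepsilon:=(-1)^{(\frac{p-1}{2})^{2}\frac{m}{2}}$. First I would read off the possible weights: by Lemma~\ref{lem13}, for $a\in\mathbb{F}_{p}^{*}$ and $b\in\mathbb{F}_{q}^{*}$ with $m$ even, $n_{(b,a)}=p^{m-2}$ when $\tr(b^{-1})=0$ and $n_{(b,a)}=p^{m-2}-\varepsilon\,\eta(b)\,p^{\frac{m-2}{2}}$ when $\tr(b^{-1})\neq0$; substituting into \eqref{eq-wt} gives $\wt(\mathbf{c}_{b})=(p-1)p^{m-2}$ in the first case and $\wt(\mathbf{c}_{b})=(p-1)p^{m-2}+\varepsilon\eta(b)p^{\frac{m-2}{2}}$ (one of $(p-1)p^{m-2}\pm p^{\frac{m-2}{2}}$, since $\varepsilon\eta(b)=\pm1$) in the second. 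So the three weights are precisely those of Table~\ref{tal:weightdistribution4}, and only the frequencies remain.

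The counting step uses the bijection $b\mapsto b^{-1}$ of $\mathbb{F}_{q}^{*}$, under which $\tr(b^{-1})$ becomes $\tr(c)$ and $\eta(b)=\eta(b^{-1})$ is preserved. Hence the number of $b$ with $\tr(b^{-1})=0$ equals $|\{c\in\mathbb{F}_{q}^{*}:\tr(c)=0\}|=p^{m-1}-1$ by \eqref{eq-na}, which is the multiplicity of $(p-1)p^{m-2}$. Likewise the number of $b$ with $\tr(b^{-1})\neq0$ and $\eta(b)=-1$ equals $|\{c\in\mathbb{F}_{q}^{*}:\tr(c)\in\mathbb{F}_{p}^{*},\ \eta(c)=-1\}|=\sum_{a\in\mathbb{F}_{p}^{*}}|M_{a}|$, which by Lemma~\ref{lem7} (the value of $|M_{a}|$ being independent of $a$ for even $m$) simplifies to $\tfrac12(p-1)(p^{m-1}-\varepsilon' p^{\frac{m-2}{2}})$ with $\varepsilon':=(-1)^{\frac{(p-1)m}{4}}$; the number with $\eta(b)=1$ is then the complementary $\tfrac12(p-1)(p^{m-1}+\varepsilon' p^{\frac{m-2}{2}})$ (one may also derive these from Lemma~\ref{lem6}).

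The one delicate point—and the main obstacle—is reconciling the two sign exponents, i.e. showing $\varepsilon=\varepsilon'$ whenever $m$ is even. Writing $u=\tfrac{p-1}{2}$ and $v=\tfrac m2$, one has $u^{2}v-uv=u(u-1)v$, which is even because $u(u-1)$ is a product of consecutive integers, so $(-1)^{u^{2}v}=(-1)^{uv}$, that is $\varepsilon=\varepsilon'$. With this in hand, the weight $(p-1)p^{m-2}+p^{\frac{m-2}{2}}$ occurs exactly for those $b$ with $\tr(b^{-1})\neq0$ and $\eta(b)=\varepsilon$, and its multiplicity comes out to $\tfrac12(p-1)(p^{m-1}+p^{\frac{m-2}{2}})$ whether $\varepsilon=1$ or $\varepsilon=-1$; similarly $(p-1)p^{m-2}-p^{\frac{m-2}{2}}$ has multiplicity $\tfrac12(p-1)(p^{m-1}-p^{\frac{m-2}{2}})$, matching Table~\ref{tal:weightdistribution4}. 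Finally, these multiplicities together with the zero codeword sum to $p^{m}$, and since all three nonzero weights are strictly positive for $m>2$, the map $b\mapsto\mathbf{c}_{b}$ on $\mathbb{F}_{q}$ is injective; therefore $\mathcal{C}_{D_{a}}$ has $p^{m}$ codewords, hence dimension $m$, and length $n_{a}=p^{m-1}$ by \eqref{eq-na}.
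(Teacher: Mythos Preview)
Your proof is correct and follows essentially the same route as the paper: read off the three weights from Lemma~\ref{lem13} via \eqref{eq-wt}, then count how many $b$ fall into each class. The only cosmetic differences are that you sum $|M_{a}|$ from Lemma~\ref{lem7} over $a\in\mathbb{F}_{p}^{*}$ whereas the paper complements $|M|$ from Lemma~\ref{lem6}, and you make two points explicit that the paper leaves implicit: the bijection $b\mapsto b^{-1}$ (needed because Lemma~\ref{lem13} conditions on $\tr(b^{-1})$ while Lemmas~\ref{lem6}--\ref{lem7} condition on $\tr(b)$), and the identity $(-1)^{(\frac{p-1}{2})^{2}\frac{m}{2}}=(-1)^{\frac{(p-1)m}{4}}$ reconciling the sign in Lemma~\ref{lem13} with that in Lemmas~\ref{lem6}--\ref{lem7}.
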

\begin{proof}
 If $m \equiv0 \pmod{2},$ by Lemma \ref{lem6}, we have
\begin{align*}
  |M|&=\left|\{b\in \mathbb{F}_{q}^{*}: \eta(b)=-1\textrm{\ and\ } \tr(b)=0\}\right|&\\
  &=\frac{1}{2p}(q-p)+\frac{1}{2p}(-1)^{\frac{m(p-1)}{4}}(p-1)p^{\frac{m}{2}}.&.
\end{align*}
Let
\begin{align*}
 \overline{M}&=\{b\in \mathbb{F}_{q}^{*}: \eta(b)=-1\textrm{\ and\ } \tr(b)\neq 0\},\\
N&=\{b\in \mathbb{F}_{q}^{*}: \eta(b)=1\textrm{\ and\ } \tr(b)= 0\},
\end{align*}
 and
$$\overline{N}=\{b\in \mathbb{F}_{q}^{*}: \eta(b)=1\textrm{\ and\ } \tr(b)\neq 0\}. $$
By definition, we know
$$|N|=p^{m-1}-1-|M| .$$
 Since $|\{b\in \mathbb{F}_{q}^{*}: \eta(b)=-1\}|=(q-1)/2$, we get
\begin{align*}
|\overline{M}|=\frac{1}{2}(q-1)-|M|=\frac{p-1}{2}\left(p^{m-1}-(-1)^{\frac{m(p-1)}{4}}p^{\frac{m-2}{2}}\right),
\end{align*}
and
\begin{align*}
 |\overline{N}|=\frac{1}{2}(q-1)-|N|=\frac{p-1}{2}\left(p^{m-1}+(-1)^{\frac{(p-1)m}{4}}p^{\frac{m-2}{2}}\right).
\end{align*}
The weight distribution of \autoref{tal:weightdistribution4} follows from  \eqref{eq-wt} and  Lemma \ref{lem13}. The dimension of the code $\mathcal{C}_{D_a}$ of Theorem \ref{Thm4} equals $m$, since $\wt(\mathbf{c}_b)>0$ for $b\in \mathbb{F}_{q}^{*}$.
\end{proof}
\begin{Exp}
Let $(p,m,a)=(5,4,1)$. Then the corresponding code $\mathcal{C}_{D_1}$ has parameters $[125,4,95]$ and weight enumerator $1+240x^{95}+124x^{100}+260x^{105}$.
\end{Exp}
\begin{Exp}
Let $(p,m,a)=(3,4,1)$. Then the corresponding code $\mathcal{C}_{D_1}$ has parameters $[27, 3, 15 ]$ and weight enumerator $1+24x^{15}+26x^{18}+30x^{21}$.
\end{Exp}
\section{Concluding Remarks}
In this paper, we present a class of linear codes with three weights. There is a survey on three-weight codes in \cite{DLLZ}. A number of three-weight codes were constructed in \cite{CK,CW,Ding09,D15,DY,DD15,DD,LYL1,LYL2,WDX,YY15,ZD14,ZLFH}. We did not find the parameters of the class of  linear codes $\mathcal{C}_{D_a}$ ($a\in \mathbb{F}_{p}^{*}$) in this paper in these references.

Let $w_{\min}$ and $w_{\max}$ denote the minimum and maximum nonzero weight of a linear code $\mathcal{C}$. As stated in \cite{YD06}, the linear code $\mathcal{C}$ can be employed to construct a secret sharing scheme with  interesting access structures if  $w_{\min}/w_{\max}>p-1/p$.

Let $m\equiv1\pmod{2}$ and $m>3$. Then for the linear code $\mathcal{C}_{D_a}$ ($a\neq0$) of Theorem \ref{Thm3}, we have
$$
\frac{w_{\min}}{w_{\max}}=\frac{(p-1)p^{m-2}-p^{\frac{m-3}{2}}}{(p-1)p^{m-2}+p^{\frac{m-3}{2}}}>\frac{p-1}{p}.
$$

Let $m\equiv0\pmod{2}$ and $m>2$. Then for the linear code $\mathcal{C}_{D_a}$ ($a\neq0$) of Theorem \ref{Thm4}, we have
$$
\frac{w_{\min}}{w_{\max}}=\frac{(p-1)p^{m-2}-p^{\frac{m-2}{2}}}{(p-1)p^{m-2}+p^{\frac{m-2}{2}}}>\frac{p-1}{p}.
$$

Hence, the linear codes in this paper satisfy $w_{\min}/w_{\max}>(p-1)/p$ if $m\geq6$, and can be used to get secret sharing schemes with interesting access structures.

\end{document}